\newcommand{\myi}{(\emph{i})\xspace}
\newcommand{\myii}{(\emph{ii})\xspace}
\newcommand{\A}{\mathcal{A}} 
\newcommand{\C}{\mathcal{C}}
 \renewcommand{\L}{\mathcal{L}}
\newcommand{\M}{\mathcal{M}} 
 \renewcommand{\P}{\mathcal{P}}
 \newcommand{\R}{\mathcal{R}}
\renewcommand{\S}{\mathcal{S}} 
\newcommand{\U}{\mathcal{U}}
\newcommand{\limp}{\mathbin{\rightarrow}}
\newcommand{\Next}{\raisebox{-0.27ex}{\LARGE$\circ$}}
\newcommand{\Wnext}{\raisebox{-0.27ex}{\LARGE$\bullet$}}
\renewcommand{\Until}{\mathop{\U}}
\newcommand{\Release}{\mathop{\R}}
\newcommand{\true}{\mathit{true}}
\newcommand{\false}{\mathit{false}}
\newcommand{\ttrue}{\mathtt{true}}
\newcommand{\ffalse}{\mathtt{false}}
\newcommand{\Last}{\mathit{last}}
\newcommand{\length}{\mathit{length}}
\newcommand{\nnf}{\mathit{nnf}}
\newcommand{\BOX}[1]{ [#1]}
\newcommand{\DIAM}[1]{\langle #1 \rangle}
\newcommand{\LTL}{{\sc ltl}\xspace}
\newcommand{\LTLf}{{\sc ltl}$_f$\xspace}
\newcommand{\LDL}{{\sc ldl}\xspace}
\newcommand{\LDLf}{{\sc ldl}$_f$\xspace}
\newcommand{\RE}{{\sc re}$_f$\xspace}
\newcommand{\PDL}{{\sc pdl}\xspace}
\newcommand{\AFW}{{\sc afw}\xspace}
\newcommand{\NFA}{{\sc nfa}\xspace}
\newcommand{\DFA}{{\sc dfa}\xspace}
\newcommand{\declare}{{\sc declare}\xspace}
\newcommand{\Nat}{{\rm I\kern-.23em N}}
\newcommand{\Prop}{\P}
\newcommand{\rvass}[2]{\ensuremath{\llbracket #1 =#2\rrbracket}}
\newcommand{\pref}{\mathsf{pref}}
\newcommand{\temptrue}{\mathit{temp\_true}}
\newcommand{\tempfalse}{\mathit{temp\_false}}
\newcommand{\permtrue}{\mathit{perm\_true}}
\newcommand{\permfalse}{\mathit{perm\_false}}
\newcommand{\possgood}{\mathit{poss\_good}}
\newcommand{\necgood}{\mathit{nec\_good}}
\newcommand{\necbad}{\mathit{nec\_bad}}
\renewcommand{\ttrue}{\mathit{tt}}
\renewcommand{\ffalse}{\mathit{ff}}
\newcommand{\Endt}{\mathit{end}}
\newcommand{\atomize}[1]{\texttt{"}\ensuremath{#1}\texttt{"}}
\newcommand{\ot}{o}
\newcommand{\tr}{\mathit{tr}}
\newcommand{\ra}{\rightarrow}
\newcommand{\biject}{\sim}
\newcommand{\bijectf}[1]{\overset{#1}{\sim}}
\newcommand{\RV}{\textsc{rv}\xspace}
\newcommand{\expand}{\textbf{\textit{\texttt{E}}}}
\newcommand{\ttt}{\textbf{\textit{\texttt{T}}}}
\newcommand{\fff}{\textbf{\textit{\texttt{F}}}}
\renewcommand{\atomize}[1]{#1}
\renewcommand{\set}[1]{\{ #1 \}}
\newcommand{\statesize}{.5 cm}
\tikzstyle{dot}=[
\tikzstyle{state}=[circle,draw=black,very thick,minimum width=\statesize]
\tikzstyle{istate}=[state,initial,initial text={\!\!\!\!\!},initial distance=3mm]
\tikzstyle{truecolor}=[fill=green!30]
\tikzstyle{temptruecolor}=[fill=blue!30]
\tikzstyle{falsecolor}=[fill=red!30]
\tikzstyle{tempfalsecolor}=[fill=orange!30]
\tikzstyle{truestate}=[state,truecolor,line width=2pt]
\tikzstyle{temptruestate}=[state,temptruecolor,very thick]
\tikzstyle{falsestate}=[state,falsecolor,very thick,densely dotted]
\tikzstyle{tempfalsestate}=[state,tempfalsecolor,very thick,densely dashed]
\tikzstyle{monitorstate}=[
\tikzstyle{truemonstate}=[
\tikzstyle{temptruemonstate}=[
\tikzstyle{falsemonstate}=[
\tikzstyle{tempfalsemonstate}=[
\definecolor{deepblue}{HTML}{0C3B80}
\definecolor{deepgreen}{HTML}{2EA601}
\definecolor{lightOrange}{HTML}{FFA03C}
\definecolor{darkOrange}{HTML}{F1800A}
\definecolor{lightBlue}{HTML}{0174CD}
\definecolor{greenF}{HTML}{2CBB5C}
\definecolor{cyan}{HTML}{86A6D5}
\definecolor{darkred}{HTML}{8B0000}
\def\DTZU {2ex}
\tikzstyle{taskfg} = [
\tikzstyle{taskbg} = [
\tikzstyle{taskline} = [
\tikzstyle{taskstyle} = [
\tikzstyle{task} = [
\tikzstyle{smalltask} = [
\tikzstyle{constraint} = [
\tikzstyle{response} = [
\tikzstyle{precedence} = [
\tikzstyle{succession} = [
\tikzstyle{respondedexistence} = [
\tikzstyle{coexistence} = [
\tikzstyle{negationconstraint} = [
\tikzstyle{notcoexistence} = [
\tikzstyle{negationsuccession} = [
\tikzstyle{exclusivechoice} = [
\tikzstyle{choice} = [
\tikzstyle{circ} = [
\tikzstyle{timeline} = [
\tikzstyle{snapshot} = [
\tikzstyle{objnode} = [
\tikzstyle{tobj} = [
\tikzstyle{cobj} = [
\tikzstyle{link} = [
\begin{document}

\title{Monitoring Constraints and Metaconstraints \\ with Temporal Logics on Finite Traces \\ \medskip 
  {\footnotesize (This work is a revised and extended version of~\cite{DDGM14} \\ and
    currently under journal submission)}}

\titlerunning{Monitoring Metaconstraints with Temporal Logics}

\author{Giuseppe {De Giacomo}\inst{1} \and Riccardo {De Masellis}\inst{2} \and Fabrizio Maria Maggi\inst{3} \and
  Marco Montali\inst{4}}

\authorrunning{G.\ {De Giacomo} et al.}

\institute{
  Sapienza Universit\`a di Roma, \email{degiacomo@diag.uniroma1.it}
  \and
  Fondazione Bruno Kessler, \email{r.demasellis@fbk.eu}
  \and
  University of Tartu, \email{f.m.maggi@ut.ee}
  \and
  Free University of Bozen-Bolzano, \email{montali@inf.unibz.it}
}

\maketitle

\begin{abstract}
Runtime monitoring is one of the central tasks in the area of operational decision support for business process management. In particular, it helps process executors to check on-the-fly whether a running process instance satisfies business constraints of interest, providing an immediate feedback when deviations occur.
We study runtime monitoring of properties expressed in \LTL on finite traces (\LTLf), and in its extension \LDLf.
\LDLf is a powerful logic that captures all monadic second order logic on finite traces, and that is obtained by combining regular expressions with \LTLf, adopting the syntax of propositional dynamic logic (\PDL). Interestingly, in spite of its greater expressivity, \LDLf has exactly the same computational complexity of \LTLf.
We show that \LDLf is able to declaratively express, in the logic itself, not only the constraints to be monitored, but also the de-facto standard RV-LTL monitors. On the one hand, this enables us to directly employ the standard characterization of \LDLf based on finite-state automata to monitor constraints in a fine-grained way. On the other hand, it provides the basis
    for declaratively expressing sophisticated
metaconstraints that predicate on the monitoring state of other constraints, and to check them by relying on standard logical services instead of ad-hoc algorithms.
In addition, we devise a direct translation of \LDLf formulae into
nondeterministic finite-state automata, avoiding to detour to B\"uchi automata or alternating automata. We then report on how this approach has been effectively implemented using Java to manipulate \LDLf formulae and their corresponding monitors, and the well-known \textsc{ProM} process mining suite as underlying operational decision support infrastructure.
\end{abstract}



\section{Introduction}
Runtime monitoring is one of the central tasks to provide \emph{operational decision support} to running business processes \cite{Aal16}. While traditional process mining techniques analyze event data of already completed process instances, operational support lifts process mining to running, live process executions, providing an online feedback that can be used to influence the future continuations of such executions. In this setting, the goal of monitoring is to check on-the-fly whether a running process instance complies with business constraints and rules of interest, promptly detecting deviations \cite{LMM13}. Such indicators can, in turn, be used to compute different monitoring metrics, obtaining a succinct summary about the degree of compliance of a running process instance.

In order to provide provably correct runtime monitoring techniques with a well-defined semantics and a solid formal background, monitoring is typically rooted into the field of \emph{formal verification}, the branch of formal methods aimed at checking whether a system meets some property of interest. Being the system dynamic, properties are typically expressed by making use of temporal logics, that is, modal logics whose modal operators predicate about the evolution of the system along time.



Among all the temporal logics used in verification, Linear-time
Temporal Logic (\LTL) is particularly suited for monitoring, as an
actual system execution is indeed a linear sequence of events. Since the \LTL semantics is given in terms of infinite traces, \LTL monitors analyze the trace of interest by considering it as the prefix of an infinite trace that will continue forever \cite{Bauer2010:LTL}. However, this hypothesis falls short in several contexts, where the usual assumption is that each trace produced by the system is in fact finite. This is often the case in Business Process Management (BPM), where each process instance is expected to eventually reach one of the foreseen ending states of the process \cite{PesV06}. In this setting, a monitored trace has to be considered as the prefix of an unknown, but still finitely long, trace. To handle this type of setting, finite-trace variants of \LTL have been introduced. In this work, we consider in particular the logic \LTLf (\LTL on finite traces), investigated in
detail in \cite{DegVa13}, and at the basis of one of the main
declarative process modeling approaches: \declare \cite{PesV06,MPVC10,MMW11}.

Following \cite{MMW11}, monitoring in \LTLf amounts to
check whether the current execution belongs to the set of
admissible \emph{prefixes} for the traces of a given \LTLf formula
$\varphi$. To achieve such a task, $\varphi$ is usually first translated
into a corresponding finite-state automaton that exactly recognizes all and only those
\emph{finite} traces that satisfy $\varphi$.
Despite the presence of previous operational decision support techniques to monitoring \LTLf constraints over finite traces \cite{MMW11,MWM12}, two main
challenges have not yet been tackled in a systematic way. 

First of
all, several alternative semantics have been proposed to make LTL
suitable for runtime verification, such as the de-facto standard RV
monitor conditions \cite{Bauer2010:LTL}, which interpret \LTL formulae using four distinct truth values that account at once for the current trace and its possible future continuations. Specifically, in the RV-\LTL framework, a formula is associated to a corresponding RV state, which may witness:
\begin{inparaenum}[\it (i)]
\item permanent violation (the formula is currently violated, and the violation cannot be repaired anymore);
\item temporary violation (the formula is currently violated but it is possible to continue the execution in a way that makes the formula satisfied);
\item permanent satisfaction (the formula is currently satisfied and it will stay satisfied no matter how the execution continues);
\item  temporary satisfaction (the formula is currently satisfied but may become violated in the future).
\end{inparaenum}
The main issue is that no comprehensive, formal framework based on finite-state automata is available to
handle such RV states. On the one hand, this is because runtime verification for temporal
logics typically focus on the infinite-trace setting \cite{Bauer2010:LTL}, with the consequence that the corresponding
automata-theoretic techniques detour to B\"uchi automata for building and using the monitors. On
the other hand, the incorporation of such an RV semantics in a finite-trace setting has only been tackled so far
with ad-hoc techniques. This is in particular the case of  \cite{MMW11}, which operationally proposes to ``color'' automata
to support the different RV states, but it does not come with an underlying formal counterpart justifying the correctness of the approach.

A second, fundamental challenge is the incorporation of advanced forms of
monitoring, going beyond what can be expressed with \LTLf. In particular, contemporary monitoring approaches do not systematically account for  \emph{metaconstraints} that predicate on the RV state of other constraints. This is especially important in a monitoring setting, where it is often of interest to consider certain constraints only when specific circumstances arise, such as when other constraints become violated.  For example, metaconstraints provide the basis for monitoring
\emph{compensation constraints}, which can be considered as the temporal version of so-called \emph{contrary-to-duty obligations}
\cite{PrS96} in normative reasoning, that is, obligations that are put in place only when other
obligations have not been fulfilled. While this feature is considered
to be a fundamental compliance monitoring functionality \cite{LMM13}, it is
still an open challenge, without any systematic approach able to
support it at the level of the constraint specification language.

In this article, we attack these two challenges by proposing a formal and operational framework for the monitoring of properties expressed in \LTLf and in its extension \LDLf \cite{DegVa13}.
\LDLf is a powerful logic that completely captures Monadic Second-Order Logic on finite traces, in turn, expressively equivalent to the language of regular expressions. \LDLf does so by combining regular expressions with \LTLf, adopting the syntax of propositional dynamic logic (\PDL). Interestingly, in spite of its greater expressivity, \LDLf has exactly the same computational complexity of \LTLf. At the same time, it provides a balanced integration between the expressiveness of regular expressions, and the declarativeness of \LTLf.

Our first, technical contribution is the formal development, accompanied by a proof-of-concept implementation, of an automata-theoretic framework for monitoring \LTLf and \LDLf constraints using the four truth values of the RV approach. We do this in two steps. In the first step, we devise a direct translation of \LDLf (and hence of \LTLf)
formulae into nondeterministic automata, which avoid the usual detour
to B\"uchi automata. The technique is grounded on
alternating automata (\AFW), but it actually avoids their introduction all
together: in fact, the technique directly produces a standard non-deterministic
finite-state automaton (\NFA), which can then be manipulated using conventional automata techniques (such as determinization and minimization).
In the second step, we show that \LDLf is able to capture, in the logic itself, special formulae that capture all RV monitoring conditions. More specifically, given  an arbitrary \LDLf formula $\varphi$, we show how to construct, for each RV monitor condition, another \LDLf formula that characterizes all and only the traces culminating in a time point where $\varphi$ is associated to that RV state. By studying the so-obtained four \LDLf special formulae, we then describe how to construct a single automaton that, given a trace, outputs the RV state associated to $\varphi$ by that trace. This, in turn, provides for the first time a proof of correctness of the ``colored automata'' approach proposed in \cite{MMW11}.

We exploit this meta-level ability of \LDLf in our second, major contribution, which shows how to use the logic to capture
 \emph{metaconstraints}, and how to monitor them by relying on usual logical services instead of ad-hoc algorithms.
 Metaconstraints provide a well-founded, declarative basis
to specify and monitor constraints depending on the monitoring state of
other constraints. To concretely show the flexibility and sophistication of our approach, we introduce and study three interesting classes of metaconstraints. The first class is about \emph{contextualizing} a constraint, by expressing that it has to be enforced only in those time points where another constraint is in a given RV state. The second class deals with two forms of the aforementioned \emph{compensation constraints}, which capture that a compensating constraint has to be monitored when another constraint becomes permanently violated. The third and last class targets the interesting case of \emph{conflicting constraints}, that is, constraints that, depending on the circumstances, may contradict each other. In particular, we show how to express a \emph{preference} on which constraint should be satisfied when a contradiction arises.

In the final part of the paper, we report on how our monitoring framework has been concretely implemented, and exposed as
 an operational decision support plug-in within \textsc{ProM}, one of the most widely adopted infrastructures for process mining.\footnote{\url{http://www.promtools.org/}}

This article is a largely extended version of the conference paper in \cite{DDGM14}. In relation with \cite{DDGM14}, we expand all technical parts, including here full proofs of the obtained  results and a completely novel part on the construction of ``colored automata'' for monitoring. In addition, we provide here a much more detailed account on metaconstraints, introducing three metaconstraint classes that have not yet been investigated in prior work. We also report here on the complete implementation of our monitoring framework.

The rest of the article is structured as follows. In Section~\ref{sec:LTLf-LDLf}, we introduce syntax and semantics of \LDLf and \LTLf. In Section~\ref{sec:automaton}, we then show how an \LDLf/\LTLf formula can be translated into a corresponding \NFA that accepts all and only the traces that satisfy the formula. In Section~\ref{sec:rtm}, we show how \LDLf is able to capture the RV states in the logic itself, and employ the automata-theoretic approach developed in Section~\ref{sec:automaton} to construct RV monitors for \LDLf/\LTLf formulae. In Section~\ref{sec:monitoringDeclare}, we discuss how the resulting framework can be applied in the context of the \declare constraint-based process modeling approach. In Section~\ref{sec:monitoring-metaconstraints}, we turn to metaconstraints, introducing the three interesting metaconstraint classes of contextualization, compensation, and preference in case of conflict. The implementation of our monitoring framework in Java and \textsc{ProM} is reported in Section~\ref{sec:implementation}. Conclusion follows.



\section{Linear Temporal Logics on Finite Traces}
\label{sec:LTLf-LDLf}

In this work, we adopt the standard \LTL and its extension \LDL, interpreted on finite traces.
\LTL on finite traces, called \LTLf \cite{DegVa13}, has exactly the same syntax as \LTL on infinite traces \cite{Pnueli77}. Namely, given a set of $\Prop$ of propositional symbols, \LTLf formulae are obtained through the following:
\[\varphi ::= \phi \mid \lnot \varphi \mid \varphi_1\land \varphi_2 \mid \varphi_1\lor \varphi_2
 \mid \Next\varphi \mid \Wnext\varphi \mid \varphi_1\Until\varphi_2 \mid \varphi_1\Release\varphi_2\]
 where $\phi$ is a propositional formula over $\Prop$,
 $\Next$ is the \emph{next} operator,
 $\Wnext$ is the \emph{weak next} operator, for which we have the
 equivalence $\Wnext\varphi \equiv \lnot\Next\lnot
 \varphi$ (notice that in the finite trace case
 $\lnot\Next\lnot \varphi \neq \Next \varphi$),
 $\Until$ is the \emph{until} operator and
 $\Release$ is release operator, for which we have the equivalence
 $\varphi_2 \Release \varphi_2\equiv \lnot (\lnot\varphi_2 \Until
 \lnot\varphi_2)$.  In addition, we have common abbreviations. For
 example, \emph{eventually} $\Diamond\varphi$ abbreviates
 $\true \Until \varphi$; and \emph{always}
 $\Box\varphi$ abbreviates $\false \Release \varphi$ or equivalently  $\lnot\Diamond\lnot\varphi$.

Notice that, for convenience and without loss of generality, we allow negation only in propositional formulae, i.e., we essentially assume the temporal formulae to be in negation normal form (NNF). An arbitrary temporal formula can be put in NNF in linear time.

The semantics of \LTLf is given in terms of \emph{finite traces}
denoting finite, \emph{possibly empty}, sequences
$\pi=\pi_0,\ldots,\pi_n$ of elements from the alphabet $2^\Prop$,
containing all possible propositional interpretations of the
propositional symbols in $\Prop$.
We denote the length of the trace $\pi$ as $\length(\pi)\doteq n+1$.
We denote as $\pi(i)\doteq \pi_i$ the $i$-th step in the trace. If the trace is shorter and  does not include an $i$-th step,  $\pi(i)$ is undefined.
We denote by $\pi(i,j)\doteq \pi_i,\pi_{i+1},\ldots,\pi_{j-1}$, the segment of the trace $\pi$ starting at the $i$-th step and ending at the $j$-th step (excluded). If $j > \length(\pi)$ then $\pi(i,j) = \pi(i,\length(\pi))$.
For every $j \le i$, we have $\pi(i,j) = \epsilon$, i.e., the empty trace.
Notice that here,
differently form \cite{DegVa13}, we allow the empty trace $\epsilon$ as in \cite{BrDP18}. This is convenient for composing monitors, as it will become clear later on in the article.
Given a finite trace $\pi$, we inductively define when an
\LTLf formula $\varphi$ \emph{is true} at a step $i$
written $\pi,i\models\varphi$, as follows (we include abbreviations for convenience):
\begin{itemize}
\item $\pi,i\models \phi$ ~iff~ $0\leq i \leq \length(\pi)$ and $\pi(i)\models\phi$ \quad\text{($\phi$ propositional)};
\item $\pi,i\models \lnot \varphi$ ~iff~  $\pi,i\not\models \varphi$;
\item $\pi,i\models \varphi_1\land\varphi_2$ ~iff~ $\pi,i\models\varphi_1$ and
  $\pi,i\models\varphi_2$;
\item $\pi,i\models \varphi_1\lor\varphi_2$ ~iff~ $\pi,i\models\varphi_1$ or
  $\pi,i\models\varphi_2$;
\item $\pi,i\models \Next\varphi$ ~iff~ $0\leq i<\length(\pi)-1$ and $\pi,i{+}1\models\varphi$;
\item $\pi,i\models \Wnext\varphi$ ~iff~ $0\leq i<\length(\pi)-1$ implies $\pi,i{+}1\models\varphi$;
\item $\pi,i\models \Diamond\varphi$ ~iff~ for some $j$ s.t.\ $0\leq i\leq j <\length(\pi)$, we have $\pi,j\models\varphi$;
\item $\pi,i\models \Box\varphi$ ~iff~ for all $j$ s.t.\ $0\leq i\leq j < \length(\pi)$, we have
  $\pi,j\models\varphi$;
\item $\pi,i\models \varphi_1\Until\varphi_2$ ~iff~ for some $j$ s.t.\ $1\leq i\leq j < \length(\pi)$, we have  $\pi,j\models\varphi_2$, and for all $k$, $i\leq k<j$, we have  $\pi,k\models\varphi_1$;
\item $\pi,i\models \varphi_1\Release\varphi_2$ ~iff~ for all $j$ s.t.\ $0\leq i\leq j < \length(\pi)$,  either  we have $\pi,j\models\varphi_2$ or for some  $k$, $i\leq k<j$, we have  $\pi,k\models\varphi_1$.
\end{itemize}

Observe that for $i\ge\length(\pi)$, hence e.g., for $\pi=\epsilon$ we get:
\begin{itemize}
\item $\pi,i\not\models \phi$ \quad\text{($\phi$ propositional)};
\item $\pi,i\models \lnot\varphi$ ~iff~  $\pi,i\not\models\varphi$;
\item $\pi,i\models \varphi_1\land\varphi_2$ ~iff~ $\pi,i\models\varphi_1$ and
  $\pi,i\models\varphi_2$;
\item $\pi,i\models \varphi_1\lor\varphi_2$ ~iff~ $\pi,i\models\varphi_1$ or
  $\pi,i\models\varphi_2$;
\item $\pi,i\not\models \Next\varphi$;
\item $\pi,i\models \Wnext\varphi$;
\item $\pi,i\not\models \Diamond\varphi$;
\item $\pi,i\models \Box\varphi$;
\item $\pi,i\not\models \varphi_1\Until\varphi_2$;
\item $\pi,i\models \varphi_1\Release\varphi_2$.
\end{itemize}

It is known that \LTLf is as expressive as First-Order Logic over finite traces, so strictly less expressive than regular expressions, which, in turn, are as expressive as Monadic Second-Order logic over finite traces.  On the other hand,  regular expressions are a too low
level formalism for expressing temporal specifications, since, for example, they miss a direct construct for negation and for conjunction~\cite{DegVa13}.

To be as expressive as regular expressions and, at the same time,
convenient as a temporal logic, in \cite{DegVa13} \emph{Linear Dynamic
  Logic of Finite Traces}, or \LDLf, has been proposed. This logic is
as natural as \LTLf, but with the full expressive power of Monadic
Second-Order logic over finite traces.  \LDLf is obtained by merging
\LTLf with regular expressions through the syntax of the well-know
logic of programs \PDL, \emph{Propositional Dynamic
  Logic} \cite{FiLa79,HaKT00}, but adopting a semantics based on finite traces.
\LDLf is an adaptation of \LDL introduced in \cite{Var11},
which, like \LTL, is interpreted over infinite traces.

Formally, \LDLf formulae are built as follows:
\[\begin{array}{lcl}
\varphi &::=& \ttrue \mid \ffalse\mid  \lnot\varphi \mid
\varphi_1 \land \varphi_2 \mid  \varphi_1 \lor \varphi_2 \mid
\DIAM{\rho}\varphi \mid \BOX{\rho}\varphi
\\
\rho &::=&\phi \mid \varphi? \mid  \rho_1 + \rho_2 \mid \rho_1; \rho_2 \mid \rho^*
\end{array}
\]
where  $\ttrue$ and
$\ffalse$ denote respectively the true and the false \LDLf formula
(not to be confused with the propositional formula $\true$ and
$\false$); $\phi$ denotes propositional formulae over $\Prop$; $\rho$ denotes path expressions, which are regular expressions
over propositional formulae $\phi$ over $\Prop$ with the addition of the test
construct $\varphi?$ typical of \PDL and are used to insert into the execution path checks for
satisfaction of additional \LDLf formulae; and $\varphi$ stand for \LDLf
formulae built by applying boolean connectives and the modal
operators $\DIAM{\rho}\varphi$ and $\BOX{\rho}\varphi$. These two operators are linked by the following equivalence $\BOX{\rho}\varphi\equiv\lnot\DIAM{\rho}{\lnot\varphi}$.

Intuitively, $\DIAM{\rho}\varphi$ states that, from the current step in the trace,
there exists an execution satisfying the regular expression $\rho$
such that its last step satisfies $\varphi$. While
$\BOX{\rho}\varphi$ states that, from the current step, all
executions satisfying the regular expression $\rho$ are such that
their last step satisfies $\varphi$.

Notice that \LDLf, as defined above, does not include propositional formulae $\phi$ as \LDLf formulae, but only as path expressions. However, they can be immediately introduced as abbreviations: $\phi \doteq \DIAM{\phi}\ttrue$.
For example, to say that eventually proposition $a$
holds, instead of writing $\DIAM{\true^*}a$, we can write
$\DIAM{\true^*;a}\ttrue$. This is analogous to what happens in
(extensions with regular expressions of) XPath, a well-known formalism
developed for navigating XML documents and graph databases
\cite{ClDe99b,Marx04b,CDLV09}.
We may keep $\phi$ as \LDLf formulae for convenience, however, we have to be careful of the difference we get if we apply negation to propositional formula $\phi$ or to $\DIAM{\phi}\ttrue$. In the first case, we get $\lnot\phi$, which is equivalent to $\DIAM{\lnot \phi}\ttrue$. In the second case, we get $\BOX{\phi}\ffalse$, which is equivalent to $\BOX{true?}\ffalse \lor \DIAM{\lnot \phi}\ttrue$, which says that either the trace is empty or $\phi$ holds in the current state. We drop the use of $\phi$ to avoid this ambiguity.

It is also convenient to introduce the following abbreviations
specific for dealing with the finiteness of the traces:
$\Endt=\BOX{\true}\ffalse$, which denotes that the trace has been
completed (the current instant is out of the range of the trace, or
the remaining fragment of the trace is empty); and
$\Last= \DIAM{\true}\Endt$, which denotes the last step of the trace.

As for \LTLf, the semantics of \LDLf is given in terms of \emph{finite
  traces} denoting a finite, \emph{possibly empty}, sequence of
consecutive steps in the trace, i.e., finite words $\pi$ over the
alphabet of $2^\Prop$, containing all possible propositional
interpretations of the propositional symbols in $\Prop$.
 The semantics of \LDLf is given in the following. An
\LDLf formula $\varphi$ \emph{is true} at a step $i$, in symbols
$\pi,i\models\varphi$, if:
\begin{itemize}
\vspace{-0.2cm}
\item $\pi,i\models \ttrue$;
\item $\pi,i\not\models \ffalse$;
\item $\pi,i\models \lnot \varphi$ ~iff~ $\pi,i\not \models\varphi$;
\item $\pi,i\models \varphi_1\land\varphi_2$ ~iff~ $\pi,i\models\varphi_1$ and
  $\pi,i\models\varphi_2$;
\item $\pi,i\models \varphi_1\lor\varphi_2$ ~iff~ $\pi,i\models\varphi_1$ or
  $\pi,i\models\varphi_2$;
\item $\pi,i\models \DIAM{\rho}\varphi$ ~iff~  for some $j$ s.t.\
$i\leq j$, we have $\pi(i,j)\in \L(\rho)$ and $\pi,j\models\varphi$;
\item $\pi,i\models \BOX{\rho}\varphi$ ~iff~  for all $j$ s.t.\
$i\leq j$; 
and $\pi(i,j)\in \L(\rho)$, we have $\pi,j\models\varphi$.
\end{itemize}

The relation $\pi(i,j)\in\L(\rho)$ 
 is defined inductively as follows:
\begin{itemize}
\item $\pi(i,j)\in\L(\phi)$ if $j=i+1~\text{and}\; 0\leq i \leq \length(\pi) \;\text{and}\; \pi(i)\models \phi  \quad \mbox{($\phi$ propositional)}$;
\item $\pi(i,j)\in\L(\varphi?)$ if $j=i\;\text{and}\;  \pi, i\models \varphi$;
\item $\pi(i,j)\in\L(\rho_1+ \rho_2)$ if $\pi(i,j)\in\L(\rho_1)\;\text{or}\; \pi(i,j)\in\L(\rho_2)$;
\item $\pi(i,j)\in\L(\rho_1; \rho_2)$ if  $\mbox{ exists } k \mbox{  s.t.\  } \pi(i,k)\in\L(\rho_1) \mbox{ and } \pi(k,j)\in\L(\rho_2)$;
\item $\pi(i,j)\in\L(\rho^*)$ if $j=i\; \text{or} \mbox{ exists } k \mbox{ s.t.\  } \pi(i,k)\in\L(\rho) \mbox{ and } \pi(k,j)\in\L(\rho^*)$.
\end{itemize}

Note that if $i\ge\length(\pi)$, hence, e.g., for $\pi=\epsilon$, the above definitions still apply;
though,
$\DIAM{\phi}\varphi$ ($\phi$ prop.) and $\DIAM{\psi}\varphi$ become trivially false. As usual, we write $\pi \models \varphi$ as a shortcut for $\pi,0 \models \varphi$.

It easy to encode \LTLf into  \LDLf: we can define a translation function $\tr$ defined by induction of the \LTLf formula as follows:
\[\begin{array}{rcl}
\tr(\phi) &=& \DIAM{\phi}\ttrue \quad\text{($\phi$ propositional)}\\
\tr(\lnot \varphi) &=& \lnot \tr(\varphi) \\
\tr(\varphi_1\land \varphi_2) &=&  \tr(\varphi_1)\land \tr(\varphi_2)\\
\tr(\varphi_1\lor \varphi_2) &=&  \tr(\varphi_1)\lor \tr(\varphi_2)\\
\tr(\Next\varphi) &=&  \DIAM{true}(\tr(\varphi) \land \lnot \Endt) \\
\tr(\Wnext\varphi) &=&  \tr( \lnot(\Next (\lnot \varphi)))\\
\tr(\Diamond \varphi) &=& \DIAM{true^*} (\tr(\varphi) \land \lnot \Endt) \\
\tr(\Box \varphi) &=& \tr (\lnot(\Diamond(\lnot \varphi)) )\\
\tr(\varphi_1\Until \varphi_2) &=&  \DIAM{(\tr(\varphi_1)?;\true)^*}(\tr(\varphi_2) \land \lnot \Endt)\\
\tr(\varphi_1\Release \varphi _2) &=& \tr( \lnot (\lnot \varphi_1 \Until \lnot \varphi_2))
\end{array}
\]
where $\nnf(\psi)$ is the function that transform $\psi$ by pushing negation inside until it is just used in front of atomic propositions.
It is also easy to encode regular expressions, used as a specification formalism for traces into \LDLf: $\rho$ translates to $\DIAM{\rho} \Endt$.

We say that a trace satisfies an \LTLf/\LDLf formula $\varphi$, written $\pi\models \varphi$, if $\pi,0\models \varphi$. Note that if $\pi$ is the empty trace, and hence $0$ is out of range, still the notion of $\pi,0\models \varphi$ is well defined. Also sometimes we denote  by $\L(\varphi)$ the set of  traces that satisfy $\varphi$. i.e., $\L(\varphi) = \{\pi\mid \pi \models \varphi\}$.


\section{\MakeLowercase{\LDLf} Automaton}
\label{sec:automaton}
 We can associate with each \LDLf formula $\varphi$ an (exponential)
\NFA $A(\varphi)$ that accepts exactly the traces that make $\varphi$
true.
Here, we provide a simple direct algorithm for computing the \NFA
corresponding to an \LDLf formula.  The correctness of the algorithm
is based on the fact that \myi we can associate each \LDLf
formula $\varphi$ with a polynomial \emph{alternating automaton on words}
(\AFW) that accepts exactly those traces that make $\varphi$
true \cite{DegVa13}, and \myii every \AFW can be transformed into an
\NFA, see, e.g., \cite{DegVa13}. 
However, to formulate the algorithm, we do not need these notions, but
we can work directly on the \LDLf formula.
Then, we define an auxiliary function $\delta$ as in Figure~\ref{fig:delta}, which takes an \LDLf
formula $\psi$ (in negation normal form) and a propositional
interpretation $\Pi$ for $\Prop$, or a special symbol $\epsilon$, and returns a
positive boolean formula whose atoms are (quoted)
 sub-formulae of $\psi$.
\begin{figure}[t!]
\begin{align*}
\delta(\ttrue,\Pi) & = \true\\
\delta(\ffalse,\Pi) & = \false\\
\delta(\phi,\Pi) &= \delta(\DIAM{\phi}\ttrue,\Pi) \quad \mbox{($\phi$ prop.)}\\
\delta(\varphi_1\land\varphi_2,\Pi) & =
\delta(\varphi_1,\Pi) \land \delta(\varphi_2,\Pi)\\
\delta(\varphi_1\lor\varphi_2,\Pi) & =
\delta(\varphi_1,\Pi) \lor \delta(\varphi_2,\Pi)\\
\delta(\DIAM{\phi}\varphi,\Pi) & =
\left\{\hspace{-1ex}\begin{array}{l}
\expand(\varphi) \mbox{ if } \Pi \models \phi \quad \mbox{($\phi$  prop.)}\\
\false \mbox{ if } \Pi\not\models \phi
\end{array}\right.\\[5pt]
\delta(\DIAM{\psi?}{\varphi},\Pi) & =
\delta(\psi,\Pi) \land \delta(\varphi,\Pi)\\
\delta(\DIAM{\rho_1+\rho_2}{\varphi},\Pi) & =
\delta(\DIAM{\rho_1}\varphi,\Pi)\lor\delta(\DIAM{\rho_2}\varphi,\Pi)\\
\delta(\DIAM{\rho_1;\rho_2}{\varphi},\Pi) & =
\delta(\DIAM{\rho_1}\DIAM{\rho_2}\varphi,\Pi)\\
\delta(\DIAM{\rho^*}\varphi,\Pi) & =
\delta(\varphi,\Pi) \lor
\delta(\DIAM{\rho}\fff_{\DIAM{\rho^*}\varphi},\,\Pi)\\
\delta(\BOX{\phi}\varphi,\Pi) & =
\left\{\hspace{-1ex}\begin{array}{l}
   \expand(\varphi) \mbox{ if }  \Pi \models \phi \quad \mbox{($\phi$ prop.)}\\
    \true \mbox{ if } \Pi\not\models \phi
\end{array}\right.\\[5pt]
\delta(\BOX{\psi?}{\varphi},\Pi) & =
\delta(\nnf(\lnot\psi),\Pi) \lor \delta(\varphi,\Pi)\\
\delta(\BOX{\rho_1+\rho_2}{\varphi},\Pi) & =
\delta(\BOX{\rho_1}\varphi,\Pi)\land\delta(\BOX{\rho_2}\varphi,\Pi)\\
\delta(\BOX{\rho_1;\rho_2}{\varphi},\Pi) & =
\delta(\BOX{\rho_1}\BOX{\rho_2}\varphi,\Pi)\\
\delta(\BOX{\rho^*}\varphi,\Pi) & =
\delta(\varphi,\Pi) \land
\delta(\BOX{\rho}\ttt_{\BOX{\rho^*}\varphi},\,\Pi)\\
\delta(\fff_{\psi},\Pi) & =  \false\\
\delta(\ttt_{\psi},\Pi) & =  \true
\end{align*}
\caption{Definition of $\delta$, where $\expand(\varphi)$ recursively replaces in $\varphi$ all occurrences of atoms of the form $\ttt_\psi$ and $\fff_\psi$ by $\psi$.}\label{fig:delta}
\end{figure}
Note that for defining $\delta$ we make use of extra
symbols of the form $\ttt_{\DIAM{\rho^*}\varphi}$ and
$\fff_{\BOX{\rho^*}\varphi}$, for handling formulae
$\DIAM{\rho^*}\varphi$ and $\BOX{\rho^*}\varphi$. Such extra symbols act in $\delta$ as they were
additional states excepts that during the recursive computation of
$\delta$ they disappear, either because evaluated to $\true$ or
$\false$ or because they are syntactically replaced by
$\DIAM{\rho}\varphi$ and $\BOX{\rho}\varphi$, respectively, when a new
state is returned. For the latter, we use an auxiliary function
$\expand(\varphi)$, which takes as input a formula $\varphi$ with these
extra symbols $\ttt_\psi$ and $\fff_\psi$ used as additional atomic
propositions, and recursively substitutes in it all their occurrences with the
formula $\psi$ itself. Notice also that for $\phi$ propositional, $\delta(\phi,\Pi) = \delta(\DIAM{\phi}\ttrue,\Pi)$, as a consequence of the equivalence $\phi\equiv\DIAM{\phi}\ttrue$.

The auxiliary function $\delta(\varphi,\epsilon)$, i.e., in the case the (remaining fragment of the) trace is empty, is defined exactly as in Figure~\ref{fig:delta} except for the following base cases :
\begin{align*}
\delta(\DIAM{\phi}\varphi,\epsilon) &=\false \quad \mbox{($\phi$  propositional)}\\
 \delta(\BOX{\phi}\varphi,\epsilon)  &= \true  \quad \;\mbox{($\phi$  propositional)}
\end{align*}
Note that $\delta(\varphi,\epsilon)$ is always either $\true$ or $\false$.

\newcommand{\algoname}{\textsc{{\LDLf}2\NFA}}
\renewcommand{\algorithmicrequire}{\textbf{Input:}}
\renewcommand{\algorithmicensure}{\textbf{Output:}}
\algrenewcommand\algorithmicindent{1em}
 \begin{figure}[!t]
\begin{algorithmic}[1]
\State\textbf{algorithm} \algoname \\
\textbf{input} \LDLf formula $\varphi$ \\
\textbf{output} \NFA $\A(\varphi) = (2^\Prop,\S,s_0,\varrho,S_f)$
 \State $s_0  \gets \{\varphi\}$ \Comment{set the initial state}
 \State $S_f \gets \{\emptyset\}$ \Comment{set final states}
 \If{($\delta(\varphi,\epsilon) = \true$)} \Comment{check if initial state is also final}
    		\State $\S_f \gets  \S_f \cup \{s_0\}$
 \EndIf
 \State $\S \gets \{s_0,\emptyset\}$, $\varrho \gets \emptyset$
 \While{($\S$ or $\varrho$ change)} 
	\For{($s\in \S$)}
		\If{($s'\models \bigwedge_{(\psi\in s)} \delta(\psi,\Pi)$} \Comment{add new state and transition}
    			\State $\S \gets \S \cup \{s'\}$
 			\State $\varrho \gets \varrho \cup \{ (s,\Pi,s')\}$	

		\If{($\bigwedge_{(\psi\in s')} \delta(\psi,\epsilon) = \true$)} \Comment{check if new state is also final}
   			\State $S_f \gets \S_f\cup\{s'\}$
    		\EndIf
	        \EndIf
        \EndFor
 \EndWhile
\end{algorithmic}
 \vspace{-.3cm}
 \caption{\NFA construction.}\label{fig:algo}
 \end{figure}
 
Using the auxiliary function $\delta$, we can build the \NFA
$A(\varphi)$ of an \LDLf formula $\varphi$ in a forward fashion as
described in Figure~\ref{fig:algo}, where:
states of $A(\varphi)$ are sets of atoms (recall that each atom is
quoted $\varphi$ sub-formulae) to be interpreted as a conjunction; the
empty conjunction $\emptyset$ stands for $\true$; $\Pi$ is  a propositional interpretation and $q'$ is a set of (quoted) sub-formulae of $\varphi$ that denotes a minimal interpretation such that $q'\models
\bigwedge_{\psi\in q)} \delta(\psi,\Pi)$. Note that
we do not need to get all $q$ such that $q'\models
\bigwedge_{\psi\in q)} \delta(\psi,\Pi)$, but
only the minimal ones.  In addition, trivially we have
$(\emptyset,a,\emptyset)\in\varrho$ for every $a\in\Sigma$.

The algorithm \algoname\ terminates in at most an exponential number of
steps, and generates a set of states $\S$ whose size is at most
exponential in the size of  $\varphi$.
We observe that the algorithm \algoname\ implicitly constructs the \AFW for $\varphi$, and transforms it into a corresponding \NFA. In particular, given an \LDLf formula $\varphi$, its sub-formulae are the states of the \AFW, with initial state the formula itself, and no final states. The auxiliary function $\delta$ grounded on the
  sub-formulae of $\varphi$ becomes the transition function of such an \AFW. This directly leads to the following result.

\begin{theorem}[\cite{DegVa13}]
\label{thm:automataSoundness}
  Let $\varphi$ be an \LDLf formula and $A(\varphi)$ the \NFA
  obtained by applying the algorithm \algoname\ to $\varphi$. Then $\pi\models\varphi \mbox{ iff } \pi\in
  \L(A(\varphi))$ for every finite trace $\pi$.
\end{theorem}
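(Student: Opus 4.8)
The plan is to follow the two-step route indicated just before the statement: first view $\delta$ as the transition function of an alternating automaton $\B_\varphi$ that recognizes $\L(\varphi)$, and then verify that \algoname\ is exactly the standard subset-style translation of $\B_\varphi$ into an equivalent \NFA. Concretely, let $\B_\varphi$ be the \AFW over the alphabet $2^\Prop$ whose states are the (quoted) subformulae arising in the evaluation of $\delta$ on $\varphi$, with initial state $\varphi$, transition function $\delta(\cdot,\cdot)$ of Figure~\ref{fig:delta}, and end-of-word acceptance governed by $\delta(\cdot,\epsilon)$ (which always returns $\true$ or $\false$). The alternating-automata characterization of \LDLf from \cite{DegVa13} gives that $\B_\varphi$ accepts exactly the finite traces satisfying $\varphi$; granting this, it remains to observe that \algoname\ performs the textbook conversion of an \AFW on finite words into an \NFA, in which a macro-state is a set of \AFW-states read as a conjunction, the empty macro-state $\emptyset$ is the accepting $\true$-sink, a transition on $\Pi$ leads to a minimal assignment $s'$ satisfying $\bigwedge_{\psi\in s}\delta(\psi,\Pi)$, and a macro-state $s$ is final exactly when $\bigwedge_{\psi\in s}\delta(\psi,\epsilon)=\true$. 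As this conversion preserves the recognized language, $\L(A(\varphi))=\L(\B_\varphi)=\L(\varphi)$, which is the claim.

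For a self-contained argument the heart is a one-step unfolding lemma for $\delta$, proved by structural induction on $\psi$ with an inner induction on the path expression $\rho$ inside $\DIAM{\rho}\cdot$ and $\BOX{\rho}\cdot$. Writing $\pi,i\models\Theta$ for the evaluation of a positive boolean formula $\Theta$ over quoted subformulae (interpreting $\land,\lor,\true,\false$ classically and each atom $\chi$ as $\pi,i\models\chi$; note that the final output of $\delta$ contains no auxiliary markers $\ttt_\psi,\fff_\psi$, which are purely transient devices of the recursion), I would establish, for every subformula $\psi$ of $\varphi$ and every finite trace $\pi$, that \myi if $i\ge\length(\pi)$ then $\pi,i\models\psi$ iff $\delta(\psi,\epsilon)=\true$, and \myii if $i<\length(\pi)$ and $\Pi=\pi(i)$ then $\pi,i\models\psi$ iff $\pi,i{+}1\models\delta(\psi,\Pi)$. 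The Boolean cases are immediate; the propositional, diamond, and box base cases follow directly from the semantics of $\L(\phi)$ together with the $\epsilon$-clauses; and the cases for $+$, $;$, and $?$ reduce to strictly smaller path expressions or formulae through the matching clauses of $\delta$, the point being that a test $\psi?$ does not advance the position, which is why $\delta(\DIAM{\psi?}\varphi,\Pi)$ keeps evaluating both $\psi$ and $\varphi$ against the same $\Pi$.

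The main obstacle is the Kleene-star case $\DIAM{\rho^*}\varphi$ and, dually, $\BOX{\rho^*}\varphi$, where $\delta$ unfolds the star one iteration at a time. Here the auxiliary atoms $\fff_{\DIAM{\rho^*}\varphi}$ and $\ttt_{\BOX{\rho^*}\varphi}$ encode the obligation to re-enter the starred expression while respectively forbidding and permitting a zero-length, test-only re-entry within the current symbol, and $\expand$ reinstates the genuine obligation precisely when a symbol is consumed and a fresh macro-state is formed. Two points need care. First, termination of the $\delta$-recursion: a well-founded measure on path expressions rules out infinite unfolding, the key being that re-entering $\rho^*$ without consuming a symbol yields a marker that $\delta$ collapses to $\false$ (resp. $\true$), cutting the regress. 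Second, correctness: since the trace is finite, $\pi,i\models\DIAM{\rho^*}\varphi$ holds iff there is a finite chain $i=k_0\le\cdots\le k_m$ with $\pi(k_l,k_{l+1})\in\L(\rho)$ and $\pi,k_m\models\varphi$, and one checks by induction on $m$ (equivalently on $\length(\pi)-i$) that the $\fff$-guarded clause computes exactly this finite least-fixpoint unfolding, while the $\ttt$-guarded clause computes the corresponding greatest fixpoint for $\BOX{\rho^*}$. Getting the least-versus-greatest fixpoint distinction right is exactly what the choice of $\fff$ inside $\DIAM{\cdot}$ and $\ttt$ inside $\BOX{\cdot}$ achieves, and this is the most delicate part of the induction.

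Finally, I would lift the one-step lemma to whole traces. Reading a macro-state $s$ as $\bigwedge_{\chi\in s}\chi$, an induction on $\length(\pi)-i$ shows that $\pi,i\models\bigwedge_{\chi\in s}\chi$ iff $A(\varphi)$ has an accepting run from $s$ on the suffix $\pi(i,\length(\pi))$: the base case $i=\length(\pi)$ is clause \myi together with the definition of $S_f$, while in the step case a transition to a minimal model $s'$ of $\bigwedge_{\psi\in s}\delta(\psi,\Pi)$ simultaneously realizes the disjunctive (nondeterministic) choices inside each $\delta(\psi,\Pi)$ and the conjunctive reading of $s$, and clause \myii identifies $\pi,i\models s$ with $\pi,i{+}1\models s'$ for a suitable such minimal $s'$. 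Instantiating at $i=0$ and $s=s_0=\{\varphi\}$ gives $\pi\models\varphi$ iff $\pi\in\L(A(\varphi))$, the empty trace being the base case, where $\epsilon\in\L(A(\varphi))$ iff $s_0\in S_f$ iff $\delta(\varphi,\epsilon)=\true$ iff $\epsilon\models\varphi$.
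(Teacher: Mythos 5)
Your proposal is correct and follows essentially the same route as the paper, which likewise justifies the theorem by observing that $\delta$ is the transition function of the \AFW for $\varphi$ from \cite{DegVa13} and that \algoname\ performs the standard \AFW-to-\NFA conversion. Your additional one-step unfolding lemma and the discussion of the $\ttt_\psi/\fff_\psi$ markers for the star cases correctly fill in the details that the paper delegates to the citation.
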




We can check the satisfiability of an \LDLf formula $\varphi$ by checking whether its corresponding \NFA $A(\varphi)$ is nonempty. The same applies for validity and logical implication, which are linearly reducible to satisfiability.
It is easy to see that $A(\varphi)$  can be built on-the-fly, and hence we can check non-emptiness in PSPACE in the size of $\varphi$. Considering that it is known that satisfiability in \LDLf is PSPACE-hard, we can conclude that the proposed construction is optimal with respect to the computational complexity for satisfiability (see~\cite{DegVa13} for details).

\section{Runtime Monitoring}
\label{sec:rtm}
From a high-level perspective, the monitoring problem amounts to
observe an evolving system execution and report the violation or
satisfaction of properties of interest at the earliest possible
time. As the system progresses, its execution trace increases, and, at
each step, the monitor checks whether the trace seen so far conforms to
the properties, by considering that the execution can still continue.
This evolving aspect has a significant impact on the monitoring output:
at each step, indeed, the outcome may have a
degree of uncertainty due to the fact that future executions are yet
unknown.

Several variants of monitoring semantics have been proposed (see
\cite{Bauer2010:LTL} for a survey). In this work, we adopt the
semantics in \cite{MMW11}, which is essentially the finite-trace variant
of the RV semantics in \cite{Bauer2010:LTL}. Interestingly, in our finite-trace setting the RV semantics can be elegantly defined, since both trace prefixes and their continuations are finite.

Given an \LTLf/\LDLf formula $\varphi$, and a current trace $\pi$, the monitor
returns one among the following four \emph{RV states}:
\begin{compactitem}
\item $\temptrue$, meaning that $\pi$ \emph{temporarily satisfies} $\varphi$, i.e.,
  it satisfies $\varphi$, but there is at least one
  possible continuation of $\pi$ that violates $\varphi$;
\item $\tempfalse$, meaning that $\pi$ \emph{temporarily violates} $\varphi$, i.e.,
  $\varphi$ is not satisfied by $\varphi$, but there is at least one
  possible continuation of $\pi$ that does so;
\item $\permtrue$, meaning that $\pi$ \emph{permanently satisfies} $\varphi$, i.e.,  $\varphi$ is satisfied by $\pi$ and it will always be, no matter how $\pi$ is extended;
\item $\permfalse$, meaning that $\pi$ \emph{permanently violates} $\varphi$, i.e., $\varphi$ it is not satisfied by $\pi$ and it will never be,
  no matter how $\pi$ is extended.
\end{compactitem}

Formally, let $\varphi$ be an \LDLf/\LTLf formula, and let $\pi$ be a trace. Then, we define whether $\varphi$ is in RV state $s \in \set{\temptrue,\tempfalse,\true,\false}$ (written $\rvass{\varphi}{\temptrue}$) on trace $\pi$ as follows:
\begin{compactitem}
\item $\pi \models \rvass{\varphi}{\temptrue}$ if $\pi \models \varphi$ and there exists a trace $\pi'$ such that $\pi \pi' \not\models \varphi$, where $\pi \pi'$ denotes the trace obtained by concatenating $\pi$ with $\pi'$;
\item $\pi \models \rvass{\varphi}{\tempfalse}$ if $\pi \not\models \varphi$ and there exists a trace $\pi'$ such that $\pi \pi' \models \varphi$;
\item $\pi \models \rvass{\varphi}{\permtrue}$ if $\pi \models \varphi$ and for every trace $\pi'$, we have $\pi \pi' \models \varphi$;
\item $\pi \models \rvass{\varphi}{\permfalse}$ if $\pi\not\models \varphi$ and for every trace $\pi'$, we have $\pi \pi' \not\models \varphi$.
\end{compactitem}
By inspecting the definition of RV states, it is straightforward to see that a formula $\varphi$ is in one and only one RV state on a trace $\pi$.

The RV states $\temptrue$ and $\tempfalse$ are not definitive: they may change into any other RV state as the system progresses. This reflects the general
unpredictability of how a system execution unfolds. Conversely, the RV states $\permtrue$ and $\permfalse$ are stable since, once outputted, they will not
change anymore. Observe that a stable RV state can be reached in
two different situations:
 \begin{inparaenum}[\it (i)]
 \item when the system execution terminates;
\item when the formula that is being monitored can be fully evaluated
  by observing a partial trace only.
\end{inparaenum}
The first case is indeed trivial, as when the
execution ends, there are no possible future evolutions and hence it
is enough to evaluate the finite (and now complete) trace seen so far
according to the  \LDLf semantics. In the
second case, instead, it is irrelevant whether the systems continues
its execution or not, since some \LDLf properties, such as
eventualities or safety properties, can be fully evaluated as soon as
something happens, e.g., when the eventuality is verified or the
safety requirement is violated.
Notice also that, when a stable state is returned by the monitor, the monitoring
analysis can be stopped.

From a more theoretical viewpoint, given an \LDLf property $\varphi$, the
monitor looks at the trace seen so far, assesses if it is a \emph{prefix} of a full trace
not yet completed, and categorizes it according to its
potential for satisfying or violating $\varphi$ in the future. We call a prefix \emph{possibly good} for an \LDLf formula
$\varphi$, if there exists an extension of it that satisfies $\varphi$.
More precisely,
  given an \LDLf formula $\varphi$, we define the set of \emph{possibly
    good prefixes for $\L(\varphi)$} as the set:
\begin{equation}\label{def:possGood}
\L_{\possgood}(\varphi) = \{\pi
  \mid \text{ there exists } \pi' \text{ such that } \pi \pi' \in \L(\varphi)\}.
\end{equation}
Prefixes for which every possible extension satisfies $\varphi$ are instead
called \emph{necessarily good}. More precisely,
  given an \LDLf formula $\varphi$, we define the set of
  \emph{necessarily good prefixes for $\L(\varphi)$} as the set:
\begin{equation}\label{def:necGood}
  \L_{\necgood}(\varphi) = \{\pi\mid \text{ for every } \pi'  \text{ such that } \pi \pi' \in \L(\varphi)\}.
\end{equation}
The set of \emph{necessarily bad prefixes} $\L_\necbad(\varphi)$ can be
defined analogously as:
\begin{equation}\label{def:necBad}
  \L_{\necbad}(\varphi) = \{\pi\mid \text{ for every } \pi'  \text{ such that } \pi \pi' \not\in \L(\varphi)\}.
\end{equation}
Observe that the necessarily bad prefixes
for $\varphi$ are the necessarily good prefixes for $\neg \varphi$, i.e., $ \L_{\necbad}(\varphi)= \L_{\necgood}(\lnot\varphi)$.


Such language-theoretic notions allow us to capture all the RV states defined before. More precisely, it is immediate to show the following.

\begin{proposition}
\label{thm:RVLDL}
Let $\varphi$ be an \LDLf formula and $\pi$ a trace. Then:
\begin{compactitem}[$\bullet$]
\item $\pi \models \rvass{\varphi}{\temptrue}$ iff $\pi \in \L(\varphi) \setminus \L_\necgood(\varphi)$;
\item $\pi \models \rvass{\varphi}{\tempfalse}$ iff $\pi \in \L(\neg \varphi) \setminus \L_\necbad(\varphi)$;
\item $\pi \models \rvass{\varphi}{\permtrue}$ iff $\pi \in \L_{\necgood}(\varphi)$;
\item $\pi \models \rvass{\varphi}{\permfalse}$ iff $\pi \in \L_{\necbad}(\varphi)$.
\end{compactitem}
\end{proposition}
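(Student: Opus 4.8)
The plan is to prove each of the four biconditionals separately by directly unfolding the definition of the corresponding RV state and comparing it, clause by clause, with the language-theoretic set on the right-hand side. Since all four RV states are defined purely through the satisfaction relation $\models$ together with existential or universal quantification over continuations $\pi'$, and since the sets $\L(\varphi)$, $\L_\necgood(\varphi)$, and $\L_\necbad(\varphi)$ are built from exactly the same ingredients, each equivalence should reduce to a propositional rewriting of quantifiers, using only the identity $\pi \in \L(\psi)$ iff $\pi \models \psi$.

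First I would dispatch the two \emph{temporary} cases, which are the most direct. For $\temptrue$, I would note that $\pi \notin \L_\necgood(\varphi)$ unfolds to $\exists \pi'.\, \pi\pi' \notin \L(\varphi)$, i.e.\ $\exists \pi'.\, \pi\pi' \not\models \varphi$; conjoining this with $\pi \in \L(\varphi)$, i.e.\ $\pi \models \varphi$, reproduces verbatim the definition of $\pi \models \rvass{\varphi}{\temptrue}$. The $\tempfalse$ case is symmetric, using $\pi \in \L(\lnot\varphi)$ iff $\pi \not\models \varphi$ and unfolding $\pi \notin \L_\necbad(\varphi)$ to $\exists \pi'.\, \pi\pi' \models \varphi$.

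Next I would treat the two \emph{permanent} cases. Here $\pi \in \L_\necgood(\varphi)$ unfolds to $\forall \pi'.\, \pi\pi' \models \varphi$ and $\pi \in \L_\necbad(\varphi)$ to $\forall \pi'.\, \pi\pi' \not\models \varphi$, whereas the RV definitions of $\permtrue$ and $\permfalse$ each carry an \emph{additional} conjunct asserting $\pi \models \varphi$ (respectively $\pi \not\models \varphi$). The step requiring care is showing this extra conjunct is redundant, and this is precisely where the admission of the empty trace $\epsilon$ (emphasized earlier in the article) is needed: instantiating the universal quantifier at $\pi' = \epsilon$ and using $\pi\epsilon = \pi$ yields $\pi \models \varphi$ (resp.\ $\pi \not\models \varphi$) for free. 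Hence $\L_\necgood(\varphi) \subseteq \L(\varphi)$ and $\L_\necbad(\varphi) \subseteq \L(\lnot\varphi)$, so the bare universal condition already entails the stronger RV definition and the two characterizations coincide.

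The main obstacle is therefore not computational but conceptual, concentrated in this single subtlety: without the empty-trace convention $\L_\necgood(\varphi)$ would not be guaranteed to lie inside $\L(\varphi)$, and the universal condition would be strictly weaker than $\permtrue$. Once this containment is in hand, every equivalence collapses to a one-line quantifier rewriting. As a shortcut for the negative cases, I would also invoke the already-noted identity $\L_\necbad(\varphi) = \L_\necgood(\lnot\varphi)$: since $\rvass{\varphi}{\tempfalse}$ and $\rvass{\varphi}{\permfalse}$ coincide definitionally with $\rvass{\lnot\varphi}{\temptrue}$ and $\rvass{\lnot\varphi}{\permtrue}$, the $\tempfalse$ and $\permfalse$ equivalences are read off from the $\temptrue$ and $\permtrue$ ones applied to $\lnot\varphi$, confirming that the result is indeed immediate as claimed.
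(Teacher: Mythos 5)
Your proposal is correct and matches what the paper intends: the paper offers no explicit proof, declaring the proposition ``immediate to show,'' and your argument is exactly the direct definitional unfolding that justifies this. Your one non-trivial observation---that the conjuncts $\pi \models \varphi$ (resp.\ $\pi \not\models \varphi$) in the $\permtrue$/$\permfalse$ definitions are subsumed by the universal conditions via instantiation at $\pi' = \epsilon$, which is licensed by the paper's explicit admission of the empty trace---is well spotted and consistent with the paper's later remark that $\L_{\necgood}(\varphi) \subseteq \L(\varphi)$.
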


We close this section by exploiting the above language-theoretic notions to
better understand the relationships that hold over the various kinds of
prefixes.
We start by observing that the set of all finite words over the alphabet $2^{\P}$ is
the union of the language of $\varphi$ and its complement
$\L(\varphi) \cup \L(\neg \varphi)=(2^{\P})^*$.
Also, any language and its complement are disjoint
$\L(\varphi) \cap \L(\neg \varphi) = \emptyset$.
Since from the definition of possibly good prefixes we have
$\L(\varphi) \subseteq \L_{\possgood}(\varphi)$ and $\L(\neg \varphi) \subseteq
\L_{\possgood}(\neg \varphi)$,  we also have that
$\L_{\possgood}(\varphi) \cup \L_{\possgood}(\neg \varphi)=(2^{\P})^*$.
Also, from this definition, it is easy to see that
$\L_{\possgood}(\varphi) \cap \L_{\possgood}(\neg \varphi)$ corresponds to:
\[
 \{ \pi \mid \text{ there exists } \pi' \text{ such that }\pi\pi' \in \L(\varphi)  \text{ and there exists }
\pi''\text{ such that }\pi\pi'' \in \L(\neg \varphi)\}
\]
meaning that the set of possibly good prefixes for $\varphi$ and the set
of possibly good prefixes for $\neg \varphi$ do intersect, and in such
intersection there are paths that can be extended to satisfy $\varphi$, but can
also be extended to satisfy $\neg \varphi$.
It is also easy to see that
$\L(\varphi) = \L_{\possgood}(\varphi) \setminus \L(\neg \varphi).$

Turning to
necessarily good prefixes and necessarily bad prefixes, it is easy to
see that $\L_{\necgood}(\varphi) = \L_{\possgood}(\varphi) \setminus
\L_{\possgood}(\neg \varphi)$, that $\L_{\necbad}(\varphi) =
\L_{\possgood}(\neg \varphi) \setminus \L_{\possgood}(\varphi)$, and
also that $\L_{\necgood}(\varphi) \subseteq \L(\varphi)\;\; \text{and}\;\; \L_{\necgood}(\varphi) \not \subseteq \L(\neg
\varphi)$.

Interestingly, necessarily good, necessarily bad, and possibly good prefixes partition all finite traces. In fact, by directly applying the definitions of necessarily good, necessarily bad, possibly good prefixes of $\L(\varphi)$ and  $\L(\lnot\varphi)$, we obtain the following.
\begin{proposition}\label{prop:lang-partitions}
The set of all traces $(2^\P)^*$ can be partitioned into
\[
\L_{\necgood}(\varphi)
\qquad\quad
\L_{\possgood}(\varphi) \cap \L_{\possgood}(\neg \varphi)
\qquad\quad
\L_{\necbad}(\varphi)
\]
such that
\[
\begin{array}[t]{l}
\L_{\necgood}(\varphi) \cup (\L_{\possgood}(\varphi) \cap \L_{\possgood}(\neg
  \varphi)) \cup \L_{\necbad}(\varphi) = (2^\P)^*\\
\L_{\necgood}(\varphi) \cap (\L_{\possgood}(\varphi) \cap \L_{\possgood}(\neg \varphi)) \cap \L_{\necbad}(\varphi) =
\emptyset.
\end{array}
\]
\end{proposition}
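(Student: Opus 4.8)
The plan is to reduce the statement to a purely set-theoretic fact about two sets and their union, using the identities already derived just above the proposition. Writing $A = \L_{\possgood}(\varphi)$ and $B = \L_{\possgood}(\neg\varphi)$ for brevity, recall that we have already established $\L_{\necgood}(\varphi) = A \setminus B$ and $\L_{\necbad}(\varphi) = B \setminus A$, while the middle block of the proposed partition is, by definition, exactly $A \cap B$. Thus the three sets appearing in the statement are precisely $A \setminus B$, $A \cap B$, and $B \setminus A$, and the whole proposition becomes a claim about these three derived sets.

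First I would invoke the already-proved covering identity $A \cup B = (2^\P)^*$, which records that every trace is a possibly good prefix for $\varphi$ or for $\neg\varphi$ (or both). Next I would verify the union claim by the elementary computation $(A\setminus B) \cup (A\cap B) \cup (B \setminus A) = A \cup B$: indeed $(A\setminus B)\cup(A\cap B) = A$, and then $A \cup (B\setminus A) = A\cup B$. Combined with the covering identity, this shows that the three blocks cover $(2^\P)^*$, which is the first displayed equation of the statement.

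Then I would establish pairwise disjointness, which is what genuinely earns the term \emph{partition}. For this, note that $A\setminus B$ excludes $B$ while both $A\cap B$ and $B\setminus A$ are contained in $B$, so $A\setminus B$ is disjoint from each of the other two blocks; symmetrically, $B \setminus A$ excludes $A$ whereas $A \cap B \subseteq A$, giving the remaining disjointness $(A\cap B)\cap(B\setminus A) = \emptyset$. Since the three blocks are pairwise disjoint and together cover the whole space, they form a partition, and in particular their triple intersection is empty, matching the second displayed equation.

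I do not expect a genuine obstacle here. Once the characterizations $\L_{\necgood}(\varphi) = A\setminus B$, $\L_{\necbad}(\varphi) = B\setminus A$, and the covering $A\cup B = (2^\P)^*$ are in hand---all of which are recorded immediately before the statement---the result is elementary set algebra. The only point requiring a moment's care is that the proposition, as displayed, asserts only the total union and the empty triple intersection; to justify the word \emph{partition} I would additionally check pairwise disjointness as above, since an empty triple intersection alone does not by itself imply that the three parts are mutually disjoint.
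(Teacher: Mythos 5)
Your proposal is correct and follows essentially the same route as the paper, which likewise derives the partition directly from the previously established identities $\L_{\necgood}(\varphi) = \L_{\possgood}(\varphi) \setminus \L_{\possgood}(\neg\varphi)$, $\L_{\necbad}(\varphi) = \L_{\possgood}(\neg\varphi) \setminus \L_{\possgood}(\varphi)$, and $\L_{\possgood}(\varphi) \cup \L_{\possgood}(\neg\varphi) = (2^\P)^*$; you merely spell out the elementary set algebra that the paper leaves implicit. Your closing remark---that pairwise disjointness is needed to justify the word \emph{partition}, since an empty triple intersection alone does not suffice---is a valid and careful observation that goes slightly beyond what the paper records.
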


\subsection{Monitoring \LDLf Formulae}
\label{sec:monitor}
As pointed out in the previous section, the core issue in monitoring is
prefix recognition. \LTLf is not expressive enough to talk about
prefixes of its own formulae. Roughly speaking, given an \LTLf formula,
the language of its possibly good prefixes cannot be in
general described as an \LTLf formula. For such a reason, building a
monitor usually requires direct manipulation of the automaton for
the formula.

\LDLf, instead, can capture any nondeterministic automaton as a formula, and it has
 the
capability of expressing properties on prefixes. We
can exploit such an extra expressivity to capture
the monitoring condition in a direct and elegant way.
We start by showing how to construct formulae representing (the
language of) prefixes of other formulae, and then we show how to
use them in the context of monitoring.

Technically, given an \LDLf formula $\varphi$, it is possible to
express the language $\L_{\possgood}(\varphi)$ with an \LDLf formula
$\varphi'$. Such a formula is obtained in two steps.

\begin{lemma}
\label{lemma:prefRE}
  Given an \LDLf formula $\varphi$, there exists a regular expression
  $\pref_{\varphi}$ such that $\L(\pref_{\varphi}) =
  \L_{\possgood}(\varphi)$.
\end{lemma}

\begin{proof}
  The proof is constructive. We build the \NFA $A(\varphi)$ for $\varphi$. We then build a new \NFA $A_{\possgood}(\varphi)$ by taking
  $A(\varphi)$ and setting as final states all states from which we can reach a final state of $A(\varphi)$. The so-obtained \NFA $A_{\possgood}(\varphi)$
  is such that
  $\L(A_{\possgood}(\varphi))=\L_{\possgood}(\varphi)$. Since \NFA
  are exactly as expressive as regular expressions, we can translate
  $A_{\possgood}(\varphi)$ to the corresponding regular expression $\pref_{\varphi}$.
\end{proof}

Since \LDLf is as expressive as regular expressions
(cf.~\cite{DegVa13}), we can translate $\pref_{\varphi}$ into an
equivalent \LDLf formula.

\begin{theorem}
  \label{th:prefFormula}
  Given an \LDLf formula $\varphi$,
  \[\begin{array}{l}
    \pi\in \L_{\possgood}(\varphi)\; \text{iff}\; \pi\models \DIAM{\pref_{\varphi}}\Endt\\
    \pi\in \L_{\necgood}(\varphi)\; \text{iff}\; \pi\models \DIAM{\pref_{\varphi}}\Endt\land \lnot \DIAM{\pref_{\lnot\varphi}}\Endt\\
  \end{array}
  \]
\end{theorem}
\begin{proof}
  Any regular expression $\rho$, and hence any regular language, can be captured in \LDLf as $\DIAM{\rho}\Endt$.
Specifically, the language $\L_{\possgood}(\varphi)$ is captured by $\DIAM{\pref_{\varphi}}\Endt$, and the language $\L_{\necgood}(\varphi)$,
which is equivalent to $\L_{\possgood}(\varphi) \setminus \L_{\possgood}(\neg \varphi)$, is captured by $\DIAM{\pref_{\varphi}}\Endt\land \lnot \DIAM{\pref_{\lnot\varphi}}\Endt$.
\end{proof}
In other words, given an \LDLf formula $\varphi$, formula
$\varphi'=\DIAM{\pref_{\varphi}}\Endt$ is an \LDLf formula such that
$\L(\varphi') = \L_{\possgood}(\varphi)$. Similarly for $\L_{\necgood}(\varphi)$.

Exploiting this result, and the results in
Proposition \ref{thm:RVLDL}, we reduce the evaluation of RV states to the
standard evaluation of \LDLf formulae over a (partial) trace. Formally:

\begin{theorem}
\label{thm:rv-ltl}
Let $\pi$ be a trace. The following equivalences hold:
  \begin{compactitem}[$\bullet$]
  \item $\pi \models \rvass{\varphi}{\temptrue}$\; iff\; $\pi \models
    \varphi \land \DIAM{\pref_{\lnot\varphi}}\Endt$;
  \item $\pi \models \rvass{\varphi}{\tempfalse}$\; iff\; $\pi \models
    \lnot\varphi \land \DIAM{\pref_{\varphi}}\Endt$;
  \item $\pi \models \rvass{\varphi}{\permtrue}$\; iff\;
    $\pi \models \DIAM{\pref_{\varphi}}\Endt\land \lnot
    \DIAM{\pref_{\lnot\varphi}}\Endt$;
  \item $\pi \models \rvass{\varphi}{\permfalse}$\; iff\;
    $\pi \models \DIAM{\pref_{\lnot\varphi}}\Endt\land \lnot
    \DIAM{\pref_{\varphi}}\Endt$.
  \end{compactitem}
\end{theorem}
\begin{proof}
  The theorem follows directly from Proposition~\ref{thm:RVLDL} and
  Theorem~\ref{th:prefFormula}.
\end{proof}

\begin{figure}[h!] \label{fig:monitoringAutomata}
  \centering
\begin{subfigure}[b]{1\textwidth}
\centering
\begin{tikzpicture} [->,>=latex,auto,node distance=1cm, very thick]

\node[state] (s1) {$s_1$};

\node[istate] (s0) [above left=1cm of s1, anchor=south] {$s_0$};

\node[state, accepting] (s2) [above right=1 cm of s1, anchor=south] {$s_2$};

\node[state, accepting] (s3) [right=1 cm of s1, anchor=west] {$s_3$};

\node[state] (s4) [right=1cm of s2, anchor=west] {$s_4$};

\path
(s0) edge node [pos=0.2, below left, align=center] {$true$} (s1)

(s1) edge node [pos=0.5] {$a$} (s2)
      edge node [above] {$\neg a$} (s3)

(s2) edge node [above] {$\neg b$} (s4)
      edge node [pos=0.5] {$b$} (s3)

(s3) edge [loop below] node {$true$} (s3)

(s4) edge [loop below] node [below, align=center] {$true$} (s4);
\end{tikzpicture}
\caption{Graphical representation of the automaton for $\Phi:=\Next (a \ra (\Wnext b))$.}
\label{subfig:general}
\end{subfigure}

\begin{subfigure}[b]{0.4\textwidth} \centering
\begin{tikzpicture} [->,>=latex,auto,node distance=1cm, very thick]

\node[state] (s1) {$s_1$};

\node[istate] (s0) [above left=1cm of s1, anchor=south] {$s_0$};

\node[state, accepting] (s2) [above right=1 cm of s1, anchor=south] {$s_2$};

\node[state] (s3) [right=1 cm of s1, anchor=west] {$s_3$};

\node[state] (s4) [right=1cm of s2, anchor=west] {$s_4$};

\path
(s0) edge node [pos=0.2, below left, align=center] {$true$} (s1)

(s1) edge node [pos=0.5] {$a$} (s2)
      edge node [above] {$\neg a$} (s3)

(s2) edge node [above] {$\neg b$} (s4)
      edge node [pos=0.5] {$b$} (s3)

(s3) edge [loop below] node {$true$} (s3)

(s4) edge [loop below] node [below, align=center] {$true$} (s4);
\end{tikzpicture}
\caption{Graphical representation of the automaton for $\Phi \land \DIAM{\pref_{\lnot\Phi}}\Endt$, used to check uf $\pi \models \rvass{\varphi}{\temptrue}$.}
\label{subfig:temptrue}
\end{subfigure} \qquad \begin{subfigure}[b]{0.4\textwidth} \centering
\begin{tikzpicture} [->,>=latex,auto,node distance=1cm, very thick]

\node[state, accepting] (s1) {$s_1$};

\node[istate, accepting] (s0) [above left=1cm of s1, anchor=south] {$s_0$};

\node[state] (s2) [above right=1 cm of s1, anchor=south] {$s_2$};

\node[state] (s3) [right=1 cm of s1, anchor=west] {$s_3$};

\node[state] (s4) [right=1cm of s2, anchor=west] {$s_4$};

\path
(s0) edge node [pos=0.2, below left, align=center] {$true$} (s1)

(s1) edge node [pos=0.5] {$a$} (s2)
      edge node [above] {$\neg a$} (s3)

(s2) edge node [above] {$\neg b$} (s4)
      edge node [pos=0.5] {$b$} (s3)

(s3) edge [loop below] node {$true$} (s3)

(s4) edge [loop below] node [below, align=center] {$true$} (s4);
\end{tikzpicture}
\caption{Graphical representation of the automaton for $\neg \Phi \land \DIAM{\pref_{\Phi}}\Endt$, used to check if $\pi \models \rvass{\varphi}{\tempfalse}$.}
\label{subfig:tempfalse}
\end{subfigure}

\vspace{0.2cm}

\begin{subfigure}[b]{0.4\textwidth} \centering
\begin{tikzpicture} [->,>=latex,auto,node distance=1cm, very thick]

\node[state] (s1) {$s_1$};

\node[istate] (s0) [above left=1cm of s1, anchor=south] {$s_0$};

\node[state] (s2) [above right=1 cm of s1, anchor=south] {$s_2$};

\node[state, accepting] (s3) [right=1 cm of s1, anchor=west] {$s_3$};

\node[state] (s4) [right=1cm of s2, anchor=west] {$s_4$};

\path
(s0) edge node [pos=0.2, below left, align=center] {$true$} (s1)

(s1) edge node [pos=0.5] {$a$} (s2)
      edge node [above] {$\neg a$} (s3)

(s2) edge node [above] {$\neg b$} (s4)
      edge node [pos=0.5] {$b$} (s3)

(s3) edge [loop below] node {$true$} (s3)

(s4) edge [loop below] node [below, align=center] {$true$} (s4);
\end{tikzpicture}
\caption{Graphical representation of the automaton for $\DIAM{\pref_{\Phi}}\Endt \land \neg \DIAM{\pref_{\neg \Phi}}\Endt$, used to check if $\pi \models \rvass{\varphi}{\permtrue}$.}
\label{subfig:true}
\end{subfigure} \qquad \begin{subfigure}[b]{0.4\textwidth} \centering
\begin{tikzpicture} [->,>=latex,auto,node distance=1cm, very thick]

\node[state] (s1) {$s_1$};

\node[istate] (s0) [above left=1cm of s1, anchor=south] {$s_0$};

\node[state] (s2) [above right=1 cm of s1, anchor=south] {$s_2$};

\node[state] (s3) [right=1 cm of s1, anchor=west] {$s_3$};

\node[state, accepting] (s4) [right=1cm of s2, anchor=west] {$s_4$};

\path
(s0) edge node [pos=0.2, below left, align=center] {$true$} (s1)

(s1) edge node [pos=0.5] {$a$} (s2)
      edge node [above] {$\neg a$} (s3)

(s2) edge node [above] {$\neg b$} (s4)
      edge node [pos=0.5] {$b$} (s3)

(s3) edge [loop below] node {$true$} (s3)

(s4) edge [loop below] node [below, align=center] {$true$} (s4);
\end{tikzpicture}
\caption{Graphical representation of the automaton for $\DIAM{\pref_{\neg \Phi}}\Endt \land \neg \DIAM{\pref_{\Phi}}\Endt$, used to check if $\pi \models \rvass{\varphi}{\permfalse}$.}
\label{subfig:false}
\end{subfigure}
\caption{Automata used to monitor the four RV truth values of formula $\Next(a \ra (\Wnext b))$.}
\end{figure}

This result provides an actual procedure to return the RV state of an \LDLf formula $\varphi$: we build four automata, one for each of the four formulae above, and then follow the evolution of the trace $\pi$ simultaneously on each one of them. Since Proposition~\ref{prop:lang-partitions} proves that the languages of the four automata are a partition for the set of all languages over $(2^\P)^*$, we are guaranteed that, at each step, one and only one automaton is in a final state, namely, one and only one truth value is returned as output of the monitoring procedure.

\begin{example} \label{ex:monitoringAutomata}
Figure~\ref{subfig:general} shows the graphical representation of the automaton for formula $\Phi :=\Next(a \ra (\Wnext b))$, where $s_0$ is the initial state and final states are double-circled. Moreover, for the sake of readability, labels on edges are logical formulae, a shortcut for every interpretation satisfying that formula, e.g., the edge labeled with $\neg a$ from state $s_1$ to $s_2$ is a shortcut for $(s_1, \set{\neg a, b}, s_2), (s_1, \set{\neg a, \neg b}, s_2) \in \varrho$. Formula $\Phi$ intuitively requires that. in the next step, if $a$ is performed then either the trace ends (for the semantics of the weak next operator $\Wnext$) or, if it continues, then it is forced to continue by performing $b$. Figures~\ref{subfig:temptrue}~--~\ref{subfig:false} represent the four automata for monitoring the different RV truth values. More specifically:
\begin{compactitem}[$\bullet$]
\item The automaton in Figure~\ref{subfig:temptrue} is used to check wether $\pi \models \rvass{\varphi}{\temptrue}$. Indeed, its final state is $s_2$, which corresponds to the subset of the final states in the original automaton from which some non-final state (in this case $s_4$) can still be reached.
\item The automaton in Figure~\ref{subfig:tempfalse} is used to check wether $\pi \models \rvass{\varphi}{\tempfalse}$. Indeed its final states are $s_0$ and $s_1$, which correspond to the subset of the non-final states in the original automaton from which some final state (in this case $s_3$) can still be reached.
\item The automaton in Figure~\ref{subfig:true} is used to check wether $\pi \models \rvass{\varphi}{\permtrue}$. Indeed, its final state is $s_3$, which corresponds to the subset of the final states in the original automaton from which no non-final state can ever be reached.
\item The automaton in Figure~\ref{subfig:false} is used to check wether $\pi \models \rvass{\varphi}{\permfalse}$. Indeed, its final state is $s_4$, which corresponds to the subset of the final states in the original automaton from which no final state can ever be reached.
\end{compactitem}
\end{example}

In the next section, we prove that using four automata is indeed redundant, and that monitoring can be performed by making use of just a single automaton that retains, at once, all the necessary monitoring information.


\subsection{Monitoring using Colored Automata}
\label{sec:colored-proof}

We now show that we can merge the four automata for monitoring the four RV truth values into a single automaton with ``colored'' states.
The idea is grounded on the intuition that the four automata in the previous section ``have the same shape'', and only differ in determining which states are final. It is hence possible to build one automaton only and then mark its states with four different colors, each corresponding to the final states of a specific formula in Theorem~\ref{thm:rv-ltl}, hence each representing one among the four RV truth values.
The intention of using a single automaton for runtime verification is not novel~\cite{MMW11}, but here, for the first time, we provide a formal justification of its correctness.

As a first step, we formally define the notion of \emph{shape equivalence} to capture the intuition that two automata have the same ``shape'', i.e., they have corresponding states and transitions, but possibly differ in their final states.

  Formally, let $A_1 =
  (2^\Prop,\S^1,s^1_{0}, \varrho^1,
  S^1_{f})$ and $A_2 =
  (2^\Prop,\S^2,s^2_{0},\varrho^2,S^2_{f})$
  be two \NFA{s} defined over a set $\Prop$ of propositional symbols. We say that $A_1$ and $A_2$ are \emph{shape equivalent}, written $A_1 \biject A_2$, if there exists
  a bijection $h:\S^1 \ra \S^2$ such that:
  \begin{compactenum}
  \item $h(s^1_{0})=s^2_{0}$;
  \item for each $(s^1_1, \Pi, s^1_2) \in
    \varrho^1$, $(h(s^1_1), \Pi, h(s^2_2)) \in
    \varrho^2$; and
  \item for each $(s^2_1, \Pi, s^2_2) \in \varrho^2$,
    $(h^{-1}(s^2_1), \Pi, h^{-1}(s^2_2)) \in
    \varrho^1$.
  \end{compactenum}
  We write $A_1 \bijectf{h} A_2$ to explicitly indicate the bijection $h$ from $A_1$ to $A_2$ that induces their shape equivalence.

It is easy to see that bijection $h$ preserves the initial states (condition (1)) and transitions (conditions (2) and (3)), but does not require a correspondence between final states.

\begin{lemma}
  Shape equivalence $\sim$ is indeed an equivalence relation.
\end{lemma}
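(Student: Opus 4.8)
The plan is to verify directly the three defining properties of an equivalence relation—reflexivity, symmetry, and transitivity—by exhibiting in each case an explicit bijection witnessing the required shape equivalence and checking that it satisfies conditions (1)--(3) of the definition. The whole argument reduces to elementary bookkeeping about the identity, the inverse, and the composition of bijections, together with the observation that conditions (2) and (3) are mirror images of each other.

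First, for reflexivity I would take the identity map $h = \mathrm{id}_{\S} : \S \ra \S$ as the witness for $A \biject A$. Condition (1) holds since $\mathrm{id}(s_0) = s_0$, and conditions (2) and (3) collapse into the same trivially true statement that every transition of $A$ is a transition of $A$. Hence $A \bijectf{\mathrm{id}} A$. Next, for symmetry, suppose $A_1 \bijectf{h} A_2$; I would claim that $h^{-1} : \S^2 \ra \S^1$ witnesses $A_2 \biject A_1$. Since $h$ is a bijection, so is $h^{-1}$, and condition (1) follows by applying $h^{-1}$ to $h(s^1_0) = s^2_0$, giving $h^{-1}(s^2_0) = s^1_0$. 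The key point is that conditions (2) and (3) are dual under inversion: condition (2) required of $h^{-1}$ (carrying transitions of $A_2$ forward into $A_1$) is precisely condition (3) already enjoyed by $h$, and symmetrically condition (3) required of $h^{-1}$ is exactly condition (2) enjoyed by $h$. The two conditions simply swap roles, so $A_2 \bijectf{h^{-1}} A_1$.

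Finally, for transitivity, suppose $A_1 \bijectf{h} A_2$ and $A_2 \bijectf{g} A_3$. I would use the composition $g \circ h : \S^1 \ra \S^3$, which is again a bijection. Condition (1) chains the two initial-state equalities, $(g\circ h)(s^1_0) = g(s^2_0) = s^3_0$. For condition (2), given a transition $(s^1_1, \Pi, s^1_2) \in \varrho^1$, condition (2) for $h$ places $(h(s^1_1), \Pi, h(s^1_2))$ in $\varrho^2$, and then condition (2) for $g$ places $((g\circ h)(s^1_1), \Pi, (g\circ h)(s^1_2))$ in $\varrho^3$; condition (3) is obtained symmetrically by running the two backward conditions in reverse order. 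Hence $A_1 \bijectf{g\circ h} A_3$, completing the verification.

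I do not expect any genuine obstacle here: the statement is essentially the standard fact that ``existence of a structure-preserving bijection'' is an equivalence relation. The only points that merit a moment's care are the duality between conditions (2) and (3)—which is exactly what lets the inverse bijection serve as the witness for symmetry—and the correct composition order $g \circ h$ for transitivity. Once these are noted, each of the three verifications is immediate.
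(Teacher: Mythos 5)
Your proof is correct and follows essentially the same route as the paper's: identity for reflexivity, inverse for symmetry, and composition for transitivity, with conditions (2) and (3) swapping roles under inversion. If anything, your explicit check of the conditions and your composition order $g\circ h$ (apply $h$ first) are more careful than the paper's terse version, which writes the composite as $h\circ g$.
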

\begin{proof}
  Reflexivity: the identity function trivially satisfies (1)-(3) above.
  Symmetry: let $A \bijectf{h} A$. Given that $h$ is a
  bijection, then $A \bijectf{h^{-1}} A$.
  Transitivity: Let $A_1 \bijectf{h} A_2$ and $A_2
  \bijectf{g} A_3$. Then $A_1 \bijectf{h\;\circ\;g}
  A_3$, where $h \circ g$ is the composition of $h$ and $g$.
\end{proof}

Hence, $\biject$ induce (equivalence) classes of automata with the same shape. Automata for the basic formulae in Theorem~\ref{thm:rv-ltl} belong to the same class.

\begin{lemma}
  For each \LDLf formula $\varphi$, $A(\varphi)$, $A(\neg \varphi)$,
  $A(\DIAM{\pref_{\varphi}}\Endt)$ and $A(\DIAM{\pref_{\neg
      \varphi}}\Endt)$ are in the same equivalence class by $\biject$.
\end{lemma}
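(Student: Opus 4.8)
The plan is to lean on the fact just established that $\biject$ is an equivalence relation, so that it suffices to link the four automata by a chain of pairwise shape equivalences rather than to exhibit a single four-way correspondence. I would isolate two reusable patterns for an arbitrary \LDLf formula $\psi$: \emph{(a)} $A(\psi) \biject A(\neg\psi)$, and \emph{(b)} $A(\psi) \biject A(\DIAM{\pref_\psi}\Endt)$. Instantiating \emph{(a)} at $\psi=\varphi$ connects $A(\varphi)$ with $A(\neg\varphi)$; instantiating \emph{(b)} at $\psi=\varphi$ connects $A(\varphi)$ with $A(\DIAM{\pref_\varphi}\Endt)$; and instantiating \emph{(b)} at $\psi=\neg\varphi$ (using $\neg\neg\varphi\equiv\varphi$) connects $A(\neg\varphi)$ with $A(\DIAM{\pref_{\neg\varphi}}\Endt)$. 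Transitivity then places all four automata in one class.

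For pattern \emph{(b)} I would run the \algoname construction on $\DIAM{\pref_\psi}\Endt$ side by side with the construction on $\psi$. The point is that $\pref_\psi$, as defined for Theorem~\ref{thm:rv-ltl}, is engineered to recognize the possibly-good language $\L_{\possgood}(\psi)$ \emph{without} changing how the automaton consumes input: I expect an induction on the reachable states to show that the two runs generate the same state set and the same transition relation $\varrho$, and that they diverge only in the acceptance test, a state being final for $\DIAM{\pref_\psi}\Endt$ exactly when a final state of $A(\psi)$ is reachable from it. The witnessing bijection $h$ is then the identity on the shared state space, and conditions (1)-(3) of shape equivalence hold immediately, since only the final-state set is affected.

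For pattern \emph{(a)} I would exploit the duality already visible in the definition of $\delta$: the clauses for $\DIAM{\cdot}$ versus $\BOX{\cdot}$, for $\true$ versus $\false$, for conjunction versus disjunction, and for the auxiliary markers $\ttt_\psi$ versus $\fff_\psi$ are pairwise dual, while $\nnf(\neg\cdot)$ maps each subformula of $\psi$ to its counterpart in $\neg\psi$. I anticipate that this step is the main obstacle, and for a concrete reason: at the purely nondeterministic level the dualized $\delta$ does not send minimal models to minimal models (a conjunctively-read state turns into a disjunctively-read one), so the raw \NFA{s} produced by \algoname for $\psi$ and for $\neg\psi$ are in general \emph{not} shape equivalent. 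The correspondence is recovered only at the level of the deterministic, complete automaton that the monitor actually runs on, where $\L(\psi)$ and $\L(\neg\psi)=\overline{\L(\psi)}$ are recognized by one and the same transition graph that differs solely in which states are accepting. I would therefore make explicit that the relevant $A(\cdot)$ is this deterministic automaton (obtained by determinizing the \NFA of \algoname), and verify that determinization is compatible with the subformula-duality bijection, so that complementation manifests exactly as a swap of accepting and non-accepting states, i.e.\ as a shape equivalence with $h$ again the identity on the common state space.
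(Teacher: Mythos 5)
Your decomposition is exactly the paper's: chain the four automata by pairwise shape equivalences and invoke transitivity, realize $A(\DIAM{\pref_\psi}\Endt)$ by keeping the state set and transitions of $A(\psi)$ and re-marking as final the states from which a final state is reachable, and realize $A(\neg\psi)$ by swapping final and non-final states, with the identity as the witnessing bijection throughout. So the route is the same.

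The one place you go beyond the paper is your discussion of pattern \emph{(a)}, and you are right to worry about it: complementation by swapping final states is sound only for deterministic, complete automata, and the raw \NFA{s} produced by \algoname for $\varphi$ and for $\nnf(\neg\varphi)$ are indeed not shape equivalent in general (your minimal-models example is the correct obstruction: a disjunctive state splits into several minimal successors while its dual conjunctive state has one). The paper's own proof silently appeals to ``automata theory'' for the final-state swap, which implicitly presupposes that $A(\cdot)$ is read as the determinized (and complete) automaton -- consistent with how the monitor is actually built later in the section, but not with the literal output of \algoname. Your explicit restriction of the claim to the deterministic automata, together with the check that determinization commutes with the identity-on-states correspondence, is therefore not a detour but a patch the paper's argument needs; the only cost is that the lemma must then be stated for the \DFA{s} rather than the \NFA{s}. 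One small point to watch in pattern \emph{(b)}: since $\L(\psi)\subseteq\L_{\possgood}(\psi)$, the re-marking must use reflexive reachability (a final state of $A(\psi)$ stays final in $A(\DIAM{\pref_\psi}\Endt)$), which your phrasing ``a final state is reachable from it'' captures correctly, whereas the paper's ``non-zero length path'' is slightly off.
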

\begin{proof}
  From automata theory, $A(\neg \varphi)$ can be obtained from
  $A(\varphi)$ by switching the final states with the non-final
  ones. Hence, the identity $i: \S^{\varphi} \ra \S^{\varphi}$ is such that $A(\varphi) \bijectf{i} A(\neg \varphi)$.
  Moreover, $A(\varphi) \biject A(\DIAM{\pref_{\varphi}}\Endt) \biject
  A(\DIAM{\pref_{\neg \varphi}}\Endt)$ as $A(\DIAM{\pref_{\varphi}}\Endt)$, respectively, $A(\DIAM{\pref_{\neg \varphi}}\Endt)$, can be obtained from $A(\varphi)$, respectively, $A(\neg \varphi)$, by setting as final states all states from which there exists a non-zero length path to a final state of $A(\varphi)$, respectively, $A(\neg \varphi)$, as explained in the proof of Lemma~\ref{lemma:prefRE}. Hence, again, the identity relation $i$ is such that $A(\varphi) \bijectf{i} A(\DIAM{\pref_{\varphi}}\Endt) \bijectf{i}
  A(\DIAM{\pref_{\neg \varphi}}\Endt)$.
\end{proof}

As the last step for proving that automata for the four formulae in Theorem~\ref{thm:rv-ltl} are in the same class, we show that the formula conjunction does not alter shape equivalence, in the following precise sense.

\begin{theorem}
  Let $\varphi_1$, $\varphi_2$, $\psi_1$ and $\psi_2$ be \LDLf
  formulae  so that $A(\varphi_1) \biject A(\psi_1)$ and
  $A(\varphi_2) \biject A(\psi_2)$. Then $A(\varphi_1 \land
  \varphi_2) \biject A(\psi_1 \land \psi_2)$.
\end{theorem}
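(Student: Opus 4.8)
The plan is to show that the \NFA\ for a conjunction is, up to shape, the synchronous product of the \NFA{s} for the two conjuncts, and that forming this product preserves shape equivalence. Since the construction sets $\delta(\chi_1\land\chi_2,\Pi)=\delta(\chi_1,\Pi)\land\delta(\chi_2,\Pi)$, the engine of the proof is a \emph{decomposition lemma} for the algorithm \algoname: every reachable state $s$ of $A(\chi_1\land\chi_2)$ splits uniquely as $s=s^{(1)}\cup s^{(2)}$, where $s^{(1)}$ is a reachable state of $A(\chi_1)$ and $s^{(2)}$ a reachable state of $A(\chi_2)$, with the initial state $\{\chi_1\land\chi_2\}$ identified with the pair $(\{\chi_1\},\{\chi_2\})$ (legitimate because $\delta(\{\chi_1\land\chi_2\},\Pi)$ and $\delta(\{\chi_1,\chi_2\},\Pi)$ coincide, so the two have identical outgoing transitions). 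Moreover, $(s^{(1)}\cup s^{(2)},\Pi,t^{(1)}\cup t^{(2)})\in\varrho$ in $A(\chi_1\land\chi_2)$ iff $(s^{(1)},\Pi,t^{(1)})\in\varrho^{\chi_1}$ and $(s^{(2)},\Pi,t^{(2)})\in\varrho^{\chi_2}$; this holds because the minimal models of $\delta(\chi_1,\Pi)\land\delta(\chi_2,\Pi)$, whose two conjuncts range over the sub-formulae of $\chi_1$ and of $\chi_2$ respectively, are exactly the unions of a minimal model of the first conjunct with a minimal model of the second.

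First I would prove this decomposition lemma by induction on the length of the run reaching $s$, using the $\delta$-equation for conjunction together with the componentwise behaviour of minimal models. I would then apply it twice, to $\chi_1=\varphi_1,\chi_2=\varphi_2$ and to $\chi_1=\psi_1,\chi_2=\psi_2$, so that the reachable states of $A(\varphi_1\land\varphi_2)$ are identified with a set $R_\varphi\subseteq\S^{\varphi_1}\times\S^{\varphi_2}$ of pairs, and those of $A(\psi_1\land\psi_2)$ with a set $R_\psi\subseteq\S^{\psi_1}\times\S^{\psi_2}$, both carrying componentwise transitions.

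Next I would define the candidate bijection by $H(s^{(1)}\cup s^{(2)})=h_1(s^{(1)})\cup h_2(s^{(2)})$, i.e.\ $H=h_1\times h_2$ read through the decomposition, where $A(\varphi_1)\bijectf{h_1}A(\psi_1)$ and $A(\varphi_2)\bijectf{h_2}A(\psi_2)$ are the given shape equivalences. Because $h_1$ and $h_2$ each preserve the initial state and both transition directions, a pair $(s_1,s_2)$ is reachable in the $\varphi$-product iff $(h_1(s_1),h_2(s_2))$ is reachable in the $\psi$-product (a straightforward induction on path length, tracking the common input word), which shows that $H$ maps $R_\varphi$ bijectively onto $R_\psi$. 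Condition~(1) holds since $H$ sends $(\{\varphi_1\},\{\varphi_2\})$ to $(\{\psi_1\},\{\psi_2\})$, and conditions~(2) and~(3) follow by combining the componentwise characterisation of transitions with the transition-preservation of $h_1$ and $h_2$. Hence $A(\varphi_1\land\varphi_2)\bijectf{H}A(\psi_1\land\psi_2)$.

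The step I expect to be the main obstacle is the decomposition lemma, and within it the minimal-model factorisation: splitting a minimal model cleanly into a left and a right part requires the atom sets of $\delta(\chi_1,\Pi)$ and $\delta(\chi_2,\Pi)$ to be disjoint, which can fail when $\chi_1$ and $\chi_2$ share sub-formulae. I would resolve this by exploiting the \emph{locality} of $\delta$: applied to a sub-formula of $\chi_i$ it only ever produces (quoted) sub-formulae of $\chi_i$, so the left and right ``tracks'' of the construction never interact, and one may safely treat the two conjuncts' sub-formulae as disjoint atom sets (tagging each atom by its originating conjunct) without altering the automaton up to shape. Under this tagging the atom sets are disjoint, the minimal-model factorisation is exact, and the decomposition lemma goes through.
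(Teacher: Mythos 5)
Your proof is correct and follows the same route as the paper's: both identify $A(\varphi_1\land\varphi_2)$ with a product of $A(\varphi_1)$ and $A(\varphi_2)$ and then transport the given bijections componentwise, via $h(s_1,s_2)=(h_1(s_1),h_2(s_2))$, checking conditions (1)--(3) exactly as you do. The one real difference is how the product structure is justified. The paper simply asserts $A(\varphi_1\land\varphi_2)\equiv A(\varphi_1)\cap A(\varphi_2)$ by appeal to the \LDLf semantics and Theorem~\ref{thm:automataSoundness}; strictly, that theorem only gives equality of accepted languages, which is weaker than the structural claim that the states are ordered pairs with componentwise transitions. Your decomposition lemma supplies precisely this missing content, deriving it from $\delta(\chi_1\land\chi_2,\Pi)=\delta(\chi_1,\Pi)\land\delta(\chi_2,\Pi)$ in \algoname{} and the factorisation of minimal models. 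You also flag the genuine subtlety the paper ignores: when the conjuncts share sub-formulae (as they do in the corollary, e.g.\ $\varphi$ versus $\DIAM{\pref_{\lnot\varphi}}\Endt$), a union of minimal models need not be minimal. Your tagging repair is the standard one; just note that it establishes the theorem for the tagged product construction rather than for the literal output of \algoname{} on the conjunction --- harmless here, since only shape equivalence is used downstream, but worth saying explicitly.
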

\begin{proof}
  From the semantics of \LDLf and Theorem~\ref{thm:automataSoundness}, it
  follows that $A(\varphi_1 \land \varphi_2) \equiv A(\varphi_1) \cap
  A(\varphi_2)$.
  Recall that states of $A(\varphi_1) \cap
  A(\varphi_2)$ are ordered pairs $(s^{\varphi_1}, s^{\varphi_2}) \in
  \S^{\varphi_1} \times \S^{\varphi_2}$.
Let $h_1$ and $h_2$ be bijections such that $A(\varphi_1) \bijectf{h_1} A(\psi_1)$ and
  $A(\varphi_2) \bijectf{h_2} A(\psi_2)$. We use $h_1$ and $h_2$ to construct a new bijection  $h:
  \S^{\varphi_1} \times \S^{\varphi_2} \ra \S^{\psi_1} \times
  \S^{\psi_2}$ such that $h(s^{\varphi_1},
  s^{\varphi_2})=(h_1(s^{\varphi_1}), h_2(s^{\varphi_2}))$. We show that $h$ satisfies criteria (1)-(3) of shape equivalence, hence inducing
 $A(\varphi_1 \land \varphi_2) \bijectf{h} A(\psi_1
  \land \psi_2)$.
  The starting state $s_0^{\varphi_1 \land \varphi_2}$ of $A(\varphi_1 \land \varphi_2)$ corresponds to
    $(s_0^{\varphi_1},
  s_0^{\varphi_2})$ by definition of $A(\varphi_1) \cap
  A(\varphi_2)$. At the same time, $s_0^{\psi_1 \land \psi_2}=(h_1(s_0^{\varphi_1}),
  h_2(s_0^{\varphi_2}))=(s_0^{\varphi_1}, s_0^{\varphi_2})$ by
  definition of $h$, which proves (1).
  Now, consider a transition $((s^{\varphi_1}_1, s^{\varphi_2}_1), \Pi, (s^{\varphi_1}_2,
  s^{\varphi_2}_2))$ in $\varrho^{\varphi_1 \land \varphi_2}$. By construction, this means that there exist transitions $(s^{\varphi_1}_1, \Pi,
  s^{\varphi_1}_2) \in \varrho^{\varphi_1}$ and $(s^{\varphi_2}_1,
  \Pi, s^{\varphi_2}_2) \in \varrho^{\varphi_2}$. Since
  $A(\varphi_1) \bijectf{h_1} A(\psi_1)$ and $A(\varphi_2)
  \bijectf{h_2} A(\psi_2)$, we have that $(h_1(s^{\varphi_1}_1), \Pi,
  h_1(s^{\varphi_1}_2)) \in \varrho^{\psi_1}$ and
  $(h_2(s^{\varphi_2}_1), \Pi, h_2(s^{\varphi_2}_2)) \in
  \varrho^{\psi_2}$. It follows that $((h_1(s^{\varphi_1}_1),
  h_2(s^{\varphi_2}_1)), \Pi, (h_1(s^{\varphi_1}_2),
  h_2(s^{\varphi_2}_2))) \in \varrho^{\psi_1 \land \psi_2}$, which
  proves (2). Condition (3) is proved analogously with $h^{-1}$.
\end{proof}

\begin{corollary}
  Given an \LDLf formula $\varphi$, automata $A(\varphi \land \DIAM{\pref_{\lnot\varphi}}\Endt)$,
  $A(\lnot\varphi \land \DIAM{\pref_{\varphi}}\Endt)$,
  $A(\DIAM{\pref_{\varphi}}\Endt\land \lnot
  \DIAM{\pref_{\lnot\varphi}}\Endt)$ and
  $A(\DIAM{\pref_{\lnot\varphi}}\Endt\land \lnot
  \DIAM{\pref_{\varphi}}\Endt)$ are in the same equivalence class by
  $\biject$.
\end{corollary}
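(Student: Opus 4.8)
The plan is to reduce this corollary entirely to the two results immediately preceding it: the lemma identifying $A(\varphi)$, $A(\lnot\varphi)$, $A(\DIAM{\pref_{\varphi}}\Endt)$ and $A(\DIAM{\pref_{\lnot\varphi}}\Endt)$ as members of a single $\biject$-class, and the theorem stating that conjunction preserves shape equivalence. Since each of the four formulae in the statement (which are exactly the RV-state characterizations of Theorem~\ref{thm:rv-ltl}) is a conjunction of two atoms drawn from a small pool of building-block formulae, it suffices to show that every such building block has an automaton in one common $\biject$-class, and then to invoke the conjunction theorem together with the fact, established above, that $\biject$ is an equivalence relation.

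First I would enlarge the pool covered by the preceding lemma. That lemma mentions $\varphi$, $\lnot\varphi$, $\DIAM{\pref_{\varphi}}\Endt$ and $\DIAM{\pref_{\lnot\varphi}}\Endt$, but the corollary also involves the negated prefix formulae $\lnot\DIAM{\pref_{\varphi}}\Endt$ and $\lnot\DIAM{\pref_{\lnot\varphi}}\Endt$. To bring these in, I would reuse the observation from the proof of that lemma: $A(\lnot\psi)$ is obtained from $A(\psi)$ merely by complementing the set of final states, so the identity bijection witnesses $A(\psi) \bijectf{i} A(\lnot\psi)$ for every \LDLf formula $\psi$. Applying this with $\psi = \DIAM{\pref_{\varphi}}\Endt$ and with $\psi = \DIAM{\pref_{\lnot\varphi}}\Endt$, and combining with the lemma through transitivity of $\biject$, yields that all six formulae $\varphi$, $\lnot\varphi$, $\DIAM{\pref_{\varphi}}\Endt$, $\DIAM{\pref_{\lnot\varphi}}\Endt$, $\lnot\DIAM{\pref_{\varphi}}\Endt$ and $\lnot\DIAM{\pref_{\lnot\varphi}}\Endt$ have shape-equivalent automata.

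Next I would match up the conjuncts. Writing each target formula as $\chi_1 \land \chi_2$, the four left conjuncts are $\varphi$, $\lnot\varphi$, $\DIAM{\pref_{\varphi}}\Endt$ and $\DIAM{\pref_{\lnot\varphi}}\Endt$, which are pairwise shape equivalent (they are precisely the four automata of the preceding lemma), while the four right conjuncts are $\DIAM{\pref_{\lnot\varphi}}\Endt$, $\DIAM{\pref_{\varphi}}\Endt$, $\lnot\DIAM{\pref_{\lnot\varphi}}\Endt$ and $\lnot\DIAM{\pref_{\varphi}}\Endt$, which are pairwise shape equivalent by the enlarged pool. Hence for any two of the four target formulae their left conjuncts are shape equivalent and their right conjuncts are shape equivalent, so the conjunction theorem delivers shape equivalence of the two whole conjunctions; closing under transitivity of $\biject$ places all four automata in a single equivalence class.

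The argument is essentially bookkeeping, so the only real obstacle is the mismatch flagged in the first step: the preceding lemma as stated does not literally cover the negated prefix formulae $\lnot\DIAM{\pref_{\varphi}}\Endt$ and $\lnot\DIAM{\pref_{\lnot\varphi}}\Endt$. Once this gap is repaired via the final-state-complementation observation, no further automata-theoretic content is needed and the conjunction theorem supplies the rest.
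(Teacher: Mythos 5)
Your proposal is correct and follows exactly the route the paper intends: the corollary is stated without proof as an immediate consequence of the preceding lemma (shape equivalence of the four building-block automata) and the theorem (conjunction preserves shape equivalence). Your explicit repair of the small gap --- extending the pool to $\lnot\DIAM{\pref_{\varphi}}\Endt$ and $\lnot\DIAM{\pref_{\lnot\varphi}}\Endt$ via the final-state-complementation observation already used in the lemma's proof --- is a welcome piece of care that the paper silently glosses over, and the rest is the same bookkeeping.
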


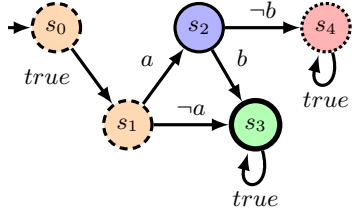
\begin{figure}[h!] 
  \centering
\begin{tikzpicture} [->,>=latex,auto,node distance=1cm, very thick]

\node[tempfalsestate] (s1) {$s_1$};

\node[tempfalsestate,istate] (s0) [above left=1cm of s1, anchor=south] {$s_0$};

\node[temptruestate] (s2) [above right=1 cm of s1, anchor=south] {$s_2$};

\node[truestate] (s3) [right=1 cm of s1, anchor=west] {$s_3$};

\node[falsestate] (s4) [right=1cm of s2, anchor=west] {$s_4$};

\path
(s0) edge node [pos=0.2, below left, align=center] {$true$} (s1)

(s1) edge node [pos=0.5] {$a$} (s2)
      edge node [above] {$\neg a$} (s3)

(s2) edge node [above] {$\neg b$} (s4)
      edge node [pos=0.5] {$b$} (s3)

(s3) edge [loop below] node {$true$} (s3)

(s4) edge [loop below] node [below, align=center] {$true$} (s4);
\end{tikzpicture}
\caption{Graphical representation of the colored automaton for $\Next (a \ra (\Wnext b))$, with the following color coding: automaton states corresponding to the RV state $\temptrue$ are represented in orange, dashed line; those corresponding to $\permtrue$ in green, thick solid line; those corresponding to $\tempfalse$ in blue, thin solid line; those corresponding to $\permfalse$ in red, dotted line. \label{fig:monitoringAutColored}}
\end{figure}

This result tells that the automata of the formulae used to capture the RV states of an \LDLf formula of interest, as captured by Theorem~\ref{thm:rv-ltl}, are identical modulo final states. In addition, by definition of the four \LDLf formulae, we directly get that each state is marked as final by one and only one of such automata. This, in turn, allows us to merge all the four automata together into a single automaton, provided that we recall, for each state in the automaton, which of the four formula marks it as final (which corresponds to declare to which of the four RV states it corresponds). In practice, we can simply build the automaton $A(\varphi)$ for $\varphi$, and ``color'' each state in the automaton according to its corresponding RV state. This can be realized with the following, direct procedure. We first build $A(\varphi) =
  (2^\Prop,\S,s_{0}, \varrho,
  S_{f})$, with $S_{f}$ the set of its final states, and, for each $s \in \S$, we compute the set $Reach(s)$ of states reachable from $s$. Then:
\begin{compactitem}[$\bullet$]
\item if
\begin{inparaenum}[\it (i)]
\item $s \in S_{f}$,
\item $Reach(s) \not \subseteq S_{f}$, and
\item $Reach(s) \cap \set{s_{f}} \not = \emptyset$,
 \end{inparaenum}
 then we mark $s$ as $\temptrue$;
\item if \begin{inparaenum}[\it (i)]
\item $s \not \in S_{f}$,
\item  $Reach(s) \not \subseteq (\S \setminus S_{f})$, and
\item $Reach(s) \cap S_{f} \not = \emptyset$,
 \end{inparaenum}
 then we mark $s$ as $\tempfalse$;
\item if \begin{inparaenum}[\it (i)]
\item $s \in S_{f}$ and
\item $Reach(s) \subseteq S_{f}$,
 \end{inparaenum}
then we mark $s$ as $\permtrue$;
\item if \begin{inparaenum}[\it (i)]
\item $s \not \in S_{f}$ and
\item $Reach(s) \subseteq (\S \setminus S_{f})$,
 \end{inparaenum}
 then we mark $s$ as $\permfalse$.
\end{compactitem}
It is easy to see that the four bullets above match the four ones of Theorem~\ref{thm:rv-ltl}. The soundness of the marking immediately follows from the definitions and results in the previous section.

\begin{example}
Figure~\ref{fig:monitoringAutColored} depicts the colored automaton for the formula in Example~\ref{ex:monitoringAutomata} and Figure~\ref{subfig:general}. States $s_0$ and $s_1$, which were final in the automaton for $\pi \models \rvass{\varphi}{\tempfalse}$, are indeed marked as $\temptrue$ (orange, dashed line); state $s_2$, which was final in the automaton for $\pi \models \rvass{\varphi}{\temptrue}$, is marked as $\temptrue$ (blue, solid thin line); state $s_3$, which was final in the automaton for $\pi \models \rvass{\varphi}{\permtrue}$, is marked with $\permtrue$ (green, solid thick line); and, lastly, state $s_4$, which was final in the automaton for $\pi \models \rvass{\varphi}{\permfalse}$, is marked with $\permfalse$ (red, dotted line).
\end{example}

Upon building the colored automaton, we can determinize it and keep it \emph{untrimmed}, that is, including trap states from which no final state can be reached. In this way, every symbol from the entire supporting set $\Prop$ is accepted by the automaton in each of its states. This becomes then our \emph{monitor}.

We conclude by noticing that the presented solution is very flexible, as the reachability analysis can be performed on-the-fly: indeed, this is the procedure we actually implemented in our runtime verification tool, as explained in Section~\ref{sec:implementation}.


\renewcommand{\tabularxcolumn}[1]{>{\arraybackslash}m{#1}}

\newcommand{\dm}{\M}
\newcommand{\supp}[2]{\textsc{supp}_{#1}(#2)}
\newcommand{\ftt}[1]{\textsc{to-ft}(#1)}
\newcommand{\itt}[1]{\overline{#1}}
\newcommand{\ft}[1]{\textsc{to-\LTLf}(#1)}
\newcommand{\unique}[1]{\xi_{#1}}
\newcommand{\modelsi}{\models_{\text{LTL}}}

\newcommand{\respondedexistencepayacc}{\ensuremath{\Diamond\activity{pay} \limp \Diamond\activity{acc}}}

\newcommand{\responsepayget}{\ensuremath{\Box (\activity{pay} \limp \Next \Diamond \activity{get})}}

\newcommand{\precedencepayget}{\ensuremath{(\neg \activity{get} \Until \activity{pay})\lor  \neg \Diamond \activity{pay}}}

\newcommand{\absencepay}{\ensuremath{\neg\Diamond (\activity{pay} \land \Next\Diamond \activity{pay})}}

\newcommand{\notcoexistencegetcancel}{\ensuremath{\neg(\Diamond \activity{get} \land \Diamond \activity{cancel})}}

\section{Monitoring Declare Constraints}
\label{sec:declare}

\label{sec:monitoringDeclare}
We now ground our monitoring approach to the case of \declare monitoring.
\declare\footnote{\url{http://www.win.tue.nl/declare/}} is a language and framework for the declarative,
constraint-based modeling of 
processes and
services.
A thorough
treatment of constraint-based processes can be found in \cite{Pes08,Mon10}.
As a modeling language, \declare takes a complementary approach
 to that of classical, imperative process modeling. In imperative process modeling, all allowed control flows
among tasks must be explicitly
represented, and execution traces not falling within this set are implicitly considered as
forbidden.
Instead of this procedural and ``closed'' approach,
\declare has a declarative, ``open'' flavor: the agents responsible
for the process execution can freely choose in which order to perform the
involved tasks, provided that the resulting execution trace satisfies the business constraints of interest.
  This is the reason why, alongside
 traditional control-flow constraints such as sequence (called in
 \declare \emph{chain succession}), \declare supports a variety of more refined
 constraints that impose loose temporal orderings, and/or that
 explicitly account for negative information, i.e., the explicit prohibition of
 task execution.

\newcommand{\nodedist}{6mm}
\newcommand{\taskdist}{9mm}
\newcommand{\autshift}{5mm}

\begin{table}[h!]
\caption{Some \declare constraints, including graphical representation, \LTLf encoding, good prefix characterization, and colored automaton. In the table: $\ot$ is a shortcut notation for \emph{other} task, that is, any task that is not explicitly mentioned by the constraint; given a task $\activity{a}$, $\neg \activity{a}$ is a shortcut notation for any task different from $\activity{a}$.\label{tab:constraints}}
\centering
\begin{tikzpicture}
  \matrix[  nodes={node distance=\nodedist},
            rectangle,draw,
            nodes in empty cells,
            row sep=.5mm,column sep=3mm,
            very thick,
            column 1/.style={anchor=west},
            column 2/.style={anchor=west,xshift=1mm,font=\footnotesize},
            column 3/.style={anchor=west,xshift=1mm,font=\footnotesize},
            column 4/.style={anchor=west},
            >=latex,->,
          ] (declarematrix) {
    \node[yshift=.5mm] {\textsc{constraint}};
    &
    \node {\LTLf};
    &
    \node {$\pref$};
    &
    \node {\textsc{colored automaton}};\\
    \node[smalltask,xshift=1.5mm] (a) {\activity{a}};
    \node[above=-1mm of a,xshift=-.3mm]{\footnotesize\activity{1..$^\ast$}};
    \node[below=0mm of a,font=\footnotesize]{\constraint{existence}};
    &
    \node{$\Diamond \activity{a}$};
    &
    \node {$\true^*$};
    &
    \node[tempfalsestate,xshift=\autshift,istate,initial where=above]  (s0) {$s_0$};
    \node[truestate,right=2*\nodedist of s0]      (s1) {$s_1$};
    \path (s0) edge [loop left] node[below,yshift=-1mm] {$\ot$} (s0);
    \path (s0) edge node[above] {$\activity{a}$} (s1);
    \path (s1) edge [loop right] node[below,yshift=-1mm] {$\true$} (s1);
    \\
    \node[smalltask,xshift=1.5mm] (a) {\activity{a}};
    \node[above=-1mm of a,xshift=-.3mm]{\footnotesize\activity{0}};
    \node[below=0mm of a,font=\footnotesize]{\constraint{absence}};
    &
    \node{$\neg \Diamond \activity{a}$};
    &
    \node {$\ot^*$};
    &
\node[temptruestate,xshift=\autshift,istate,initial where=above]  (s0) {$s_0$};
    \node[falsestate,right=2*\nodedist of s0]  (s1) {$s_1$};
    \path (s0) edge [loop left] node[below,yshift=-1mm] {$\ot$} (s0);
    \path (s0) edge node[above] {$\activity{a}$} (s1);
    \path (s1) edge [loop right] node[below,yshift=-1mm] {$\true$} (s1);
    \\
    \node[smalltask,xshift=1.5mm] (a) {\activity{a}};
    \node[above=-1mm of a,,xshift=-.3mm]{\footnotesize\activity{0..1}};
    \node[below=0mm of a,font=\footnotesize]{\constraint{absence 2}};
    &
    \node{$ \neg \Diamond(\activity{a} \land \Next\Diamond \activity{a})$};
    &
    \node {$\ot^*+(\ot^*;\activity{a};\ot^*)$};
    &
    \node[temptruestate,xshift=\autshift,istate,initial where=above]  (s0) {$s_0$};
    \node[temptruestate,right=2*\nodedist of s0]  (s1) {$s_1$};
    \node[falsestate,below right=of s0,yshift=1mm](s2) {$s_2$};
    \path (s0) edge [loop left] node[below,yshift=-1mm] {$\ot$} (s0);
    \path (s0) edge node[above] {$\activity{a}$} (s1);
    \path (s1) edge [loop right] node[below,yshift=-1mm] {$\ot$} (s1);
    \path[out=-90,in=0] (s1) edge [right] node {$\activity{a}$} (s2);
    \path (s2) edge [loop left] node {$\true$} (s2);
    \\
    \node[smalltask,xshift=1.5mm] (a) {\activity{a}};
    \node[smalltask,right=\taskdist of a] (b) {\activity{b}};
    \path[choice,very thick] (a) -- (b);
    \node[below=-1mm of a.south east,xshift=.5*\taskdist,font=\footnotesize]{\constraint{choice}};
    &
    \node{$\Diamond (\activity{a} \lor \activity{b})$};
    &
    \node {$\true^*$};
    &
    \node[tempfalsestate,xshift=\autshift,istate,initial where=above]  (s0) {$s_0$};
    \node[truestate,right=2*\nodedist of s0]      (s1) {$s_1$};
    \path (s0) edge[loop left] node[below,yshift=-1mm] {$\ot$} (s0);
    \path (s0) edge[out=10,in=170] node[above] {$\activity{a}$} (s1);
    \path (s0) edge[out=-10,in=-170] node[below] {$\activity{b}$} (s1);
    \path (s1) edge[loop right] node[below,yshift=-1mm] {$\true$} (s1);
    \\
    \node[smalltask,xshift=1.5mm] (a) {\activity{a}};
    \node[smalltask,right=\taskdist of a] (b) {\activity{b}};
    \path[respondedexistence,very thick] (a) -- (b);
    \node[below=-1mm of a.south east,xshift=.5*\taskdist,font=\footnotesize]{
      \begin{tabular}{@{}c@{}}
         \constraint{responded}\\\constraint{existence}
      \end{tabular}
    };

    &
    \node{$\Diamond \activity{a} \limp \Diamond \activity{b}$};
    &
    \node {$\true^*$};
    &
    \node[temptruestate,xshift=\autshift,istate,initial where=above]  (s0) {$s_0$};
    \node[tempfalsestate,right=2*\nodedist of s0]  (s1) {$s_1$};
    \node[truestate,below right=of s0](s2) {$s_2$};
    \path (s0) edge [loop left] node[below,yshift=-1mm] {$\ot$} (s0);
    \path (s0) edge node[above] {$\activity{a}$} (s1);
    \path[out=-90,in=180] (s0) edge node[left] {$\activity{b}$} (s2);
    \path (s1) edge [loop right] node[below,yshift=-1mm] {$\neg \activity{b}$} (s1);
    \path[out=-90,in=0] (s1) edge [right] node {$\activity{b}$} (s2);
    \path (s2) edge [loop above] node[right,xshift=1mm,yshift=-3mm] {$\true$} (s2);
    \\
    \node[smalltask,xshift=1.5mm] (a) {\activity{a}};
    \node[smalltask,right=\taskdist of a] (b) {\activity{b}};
    \path[response,very thick] (a) -- (b);
     \node[below=-1mm of a.south east,xshift=.5*\taskdist,font=\footnotesize]{
         \constraint{response}
      };
    &
    \node{$\Box(\activity{a} \limp \Next\Diamond \activity{b})$};
    &
    \node {$\true^*$};
    &
    \node[temptruestate,xshift=\autshift,istate,initial where=above]  (s0) {$s_0$};
    \node[tempfalsestate,right=2*\nodedist of s0]      (s1) {$s_1$};
    \path (s0) edge[loop left] node[below,yshift=-1mm] {$\neg \activity{a}$} (s0);
    \path (s0) edge[out=10,in=170] node[above] {$\activity{a}$} (s1);
    \path (s1) edge[out=-170,in=-10] node[below] {$\activity{b}$} (s0);
    \path (s1) edge[loop right] node[below,yshift=-1mm] {$\neg \activity{b}$} (s1);
    \\
    \node[smalltask,xshift=1.5mm] (a) {\activity{a}};
    \node[smalltask,right=\taskdist of a] (b) {\activity{b}};
    \path[precedence,very thick] (b) -- (a);
    \node[below=-1mm of a.south east,xshift=.5*\taskdist,font=\footnotesize]{
         \constraint{precedence}
      };
    &
    \node{$\neg \activity{b} \Until \activity{a} \lor \neg \Diamond \activity{b}$};
    &
    \node {$(\neg \activity{b})^*+(\ot^*;\activity{a};\true^*)$};
    &
    \node[temptruestate,xshift=\autshift,istate,initial where=above]  (s0) {$s_0$};
    \node[truestate,right=2*\nodedist of s0]  (s1) {$s_1$};
    \node[falsestate,below right=of s0,yshift=1mm](s2) {$s_2$};
    \path (s0) edge [loop left] node[below,yshift=-1mm] {$\ot$} (s0);
    \path (s0) edge node[above] {$\activity{a}$} (s1);
    \path (s1) edge [loop right] node[below,yshift=-1mm] {$\true$} (s1);
    \path[out=-90,in=180] (s0) edge [left] node {$\activity{b}$} (s2);
    \path (s2) edge [loop right] node {$\true$} (s2);
    \\
    \node[smalltask,xshift=1.5mm] (a) {\activity{a}};
    \node[smalltask,right=\taskdist of a] (b) {\activity{b}};
    \path[notcoexistence,very thick] (a) -- (b);
     \node[below=-1mm of a.south east,xshift=.5*\taskdist,font=\footnotesize]{
      \begin{tabular}{@{}c@{}}
         \constraint{not}\\\constraint{coexistence}
      \end{tabular}
    };
    &
    \node{$\neg (\Diamond \activity{a} \land \Diamond \activity{b})$};
    &
    \node {$(\activity{a}+\ot)^*+(\activity{b}+o)^*$};
    &
    \node[temptruestate,xshift=\autshift,istate,initial where=above]  (s0) {$s_0$};
    \node[temptruestate,right=of s0]      (s1) {$s_1$};
    \node[temptruestate,below=of s1]      (s2) {$s_2$};
    \node[falsestate,right=of s1]         (s3) {$s_3$};
    \path (s0) edge[loop left] node[below,yshift=-1mm] {$\ot$} (s0);
    \path (s0) edge node[above] {$\activity{a}$} (s1);
    \path (s0) edge[out=-90,in=180] node[left] {$\activity{b}$} (s2);
    \path (s1) edge[loop above] node[right,xshift=1mm,yshift=-1mm] {$\neg \activity{b}$} (s1);
    \path (s2) edge[loop above] node[right,xshift=1mm,yshift=-3mm] {$\neg \activity{a}$} (s2);
    \path (s1) edge node[above] {$\activity{b}$} (s3);
    \path (s2) edge[out=0,in=-90] node[right] {$\activity{a}$} (s3);
    \path (s3) edge[loop above] node[above,yshift=0mm] {$\true$} (s3);
    \\
  };
  \draw[very thick]
    ($(declarematrix.north west)-(0,8mm)$)
    --
    ($(declarematrix.north east)-(0,8mm)$);

  \draw[very thick]
    ($(declarematrix.north west)-(0,24mm)$)
    --
    ($(declarematrix.north east)-(0,24mm)$);

  \draw[very thick]
    ($(declarematrix.north west)-(0,42mm)$)
    --
    ($(declarematrix.north east)-(0,42mm)$);

    \draw[very thick]
    ($(declarematrix.north west)-(0,65mm)$)
    --
    ($(declarematrix.north east)-(0,65mm)$);

    \draw[very thick]
    ($(declarematrix.north west)-(0,81mm)$)
    --
    ($(declarematrix.north east)-(0,81mm)$);

    \draw[very thick]
    ($(declarematrix.north west)-(0,104mm)$)
    --
    ($(declarematrix.north east)-(0,104mm)$);

    \draw[very thick]
    ($(declarematrix.north west)-(0,120mm)$)
    --
    ($(declarematrix.north east)-(0,120mm)$);

    \draw[very thick]
    ($(declarematrix.north west)-(0,142mm)$)
    --
    ($(declarematrix.north east)-(0,142mm)$);

\end{tikzpicture}
\end{table}

Given a set $\Prop$ of tasks, a \declare model $\M$ is a set $\C$ of \LTLf (and hence \LDLf) constraints over $\Prop$. A finite trace over $\Prop$ \emph{complies with} $\M$, if it satisfies all constraints in $\C$.
Among all possible \LTLf constraints, some specific \emph{patterns} have been
singled out as particularly meaningful for expressing \declare
processes, taking inspiration from \cite{DwAC99}. Such patterns are grouped into four
families:
\begin{compactitem}
\item \emph{existence} (unary) constraints, stating that the target
  task must/cannot be executed (for an indicated amount of times);
\item \emph{choice} (binary) constraints, accounting for alternative tasks;
\item \emph{relation} (binary) constraints, connecting a source task to a target task and expressing that, whenever the
  source task is executed, then the target task must also be executed
  (possibly with additional temporal conditions);
\item \emph{negation} (binary) constraints, capturing that whenever the
  source task is executed, then the target task is prohibited
  (possibly with additional temporal conditions).
\end{compactitem}
Table~\ref{tab:constraints}, at the end of this document, summarizes some of these
patterns. See \cite{MPVC10} for the full list of patterns.

\begin{example}
\label{ex:declare}
Consider a fragment of a ticket booking process, whose \declare representation is shown in Figure~\ref{fig:declare}. The process fragment consists of four tasks and four constraints, but in spite of its simplicity clearly illustrates the main features of declarative, constraint-based process modeling.

Specifically, each process instance is focused on the management of a specific registration to a booking event made by an interested customer. For simplicity, we assume that the type of registration is selected upon instantiating the process, and is therefore not explicitly captured as a set of tasks within the process itself.
The process fragment then consists of four tasks:
\begin{compactitem}[$\bullet$]
\item $\activity{accept regulation}$ is the task used to accept the regulation of the booking company for the specific type of registration the customer is interested in;
\item $\activity{pay registration}$ is the task used to pay for the registration;
\item $\activity{get ticket}$ is the task used to physically withdraw the ticket containing the registration details;
\item $\activity{cancel registration}$ is the task used to abort the instance of the registration process.
\end{compactitem}

The execution of the aforementioned tasks is subject to the following behavioral constraints.
First of all, within an instance of the booking process, a customer may pay for the  registration at most once. This is captured in \declare by constraining the $\activity{pay registration}$ task with an \constraint{absence 2} constraint.

After executing the payment, the customer must eventually get the corresponding ticket. On the other hand, the ticket can be obtained only after having performed the payment. This is captured in \declare by constraining $\activity{pay registration}$ and $\activity{get ticket}$ with a \constraint{response} constraint going from the first task to the second, and with a \constraint{precedence} constraint going from the second task to the first. This specific combination is called \constraint{succession}; it is graphically depicted by combining the graphical notation of the two constraints, and logically corresponds to their conjunction.

When a payment is executed, the customer must accept the regulation of the registration. There is no particular temporal order required for accepting the registration: upon the payment, if the regulation has been already accepted, then no further steps are required; otherwise, the customer is expected to accept the regulation afterwards. This is captured in \declare by connecting  $\activity{pay registration}$ to  $\activity{accept regulation}$ by means of a \constraint{responded existence} constraint.

Finally, a customer may always decide to cancel the registration, with the only constraint that the cancelation is incompatible with the possibility of getting the registration ticket. This means that, when the ticket is withdrawn, no cancelation is accepted anymore, and, on the other hand, once the registration is canceled, no ticket will be issued anymore. This is captured in \declare by relating $\activity{get ticket}$ to $\activity{cancel registration}$ through a \constraint{not coexistence} constraint.

\end{example}

\renewcommand{\taskdist}{2.5cm}

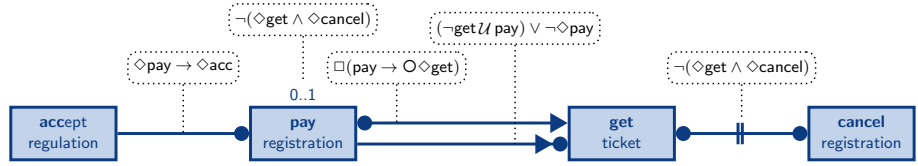
\begin{figure}[t]
\centering
\resizebox{\textwidth}{!} {
\begin{tikzpicture}[node distance=\taskdist]
  \node[task] (r) {
    \begin{tabular}{@{}c@{}}
        \begin{tabular}{@{}c@{}}
        \activity{\textbf{acc}ept}\\
        \activity{regulation}\\
    \end{tabular}
    \end{tabular}
  };
  \node[task, right=of r] (pr) {
    \begin{tabular}{@{}c@{}}
        \activity{\textbf{pay}}\\
        \activity{registration}\\
    \end{tabular}
  };
  \node[task, right=1.6*\taskdist of pr] (wt) {
    \begin{tabular}{@{}c@{}}
        \activity{\textbf{get}}\\
        \activity{ticket}\\
    \end{tabular}
  };

  \node[task, right=of wt] (cr) {
    \begin{tabular}{@{}c@{}}
        \activity{\textbf{cancel}}\\
        \activity{registration}\\
    \end{tabular}
  };

   \node[above=0mm of pr,anchor=south,taskfg] (absence2) {\activity{0..1}};

  \draw[respondedexistence] (pr) -- (r);
  \draw[response] ($(pr.east)+(0,2mm)$) -- ($(wt.west)+(0,2mm)$);
  \draw[precedence] ($(wt.west)-(0,2mm)$) -- ($(pr.east)-(0,2mm)$);
  \draw[notcoexistence] (wt) -- (cr);

  \node[  right=.5*\taskdist of r,
          anchor=center,
          yshift=.5*\taskdist,
          draw,
          dotted,
          rounded corners=5pt,
          thick,
          minimum height=6mm]
      (rex) {
      \respondedexistencepayacc
  };
  \path[thick, dotted]
    (rex)
    edge
    ($(r.east)+(.5*\taskdist,0)$);

   \node[ above=12mm of absence2,
          anchor=center,
          draw,
          dotted,
          rounded corners=5pt,
          thick,
          minimum height=6mm]
          (abs) {
      \notcoexistencegetcancel
  };
  \path[thick,dotted]
    (abs)
    edge
    (absence2);

 \node[  right=.3*\taskdist of pr,
          anchor=center,
          yshift=.5*\taskdist,
          anchor=center,
          draw,
          dotted,
          rounded corners=5pt,
          thick,
          minimum height=6mm]  (resp) {
      \responsepayget
  };
   \path[thick,dotted]
    (resp)
    edge
    ($(pr.east)+(.3*\taskdist,+2mm)$);

  \node[  right=1.2*\taskdist of pr,
          anchor=center,
          yshift=.8*\taskdist,
          anchor=center,
          draw,
          dotted,
          rounded corners=5pt,
          thick,
          minimum height=6mm]  (pre) {
      \precedencepayget
  };
  \path[thick,dotted]
    (pre)
    edge
    ($(pr.east)+(1.2*\taskdist,-2mm)$);

  \node[  right=.5*\taskdist of wt,
          anchor=center,
          yshift=.5*\taskdist,
          anchor=center,
          draw,
          dotted,
          rounded corners=5pt,
          thick,
          minimum height=6mm]
          (nco) {
      \notcoexistencegetcancel
  };
  \path[thick,dotted]
    (nco)
    edge
    ($(wt.east)+(.5*\taskdist,3mm)$);

\end{tikzpicture}
}
\caption{Fragment of a booking process in \declare, showing also the \LTLf formalization of the constraints used therein.
\label{fig:declare}}
\end{figure}

Several, logic-based techniques have been proposed to support end-users in defining, checking, and enacting \declare models \cite{PeSV07,Pes08,Mon09,Mon10,MPVC10}. More recently, the \LTLf characterization of \declare, together with its operational automata-theoretic counterpart, have been exploited to provide advanced monitoring and runtime verification facilities \cite{MMW11,MWM12}. In particular, monitoring \declare models amounts to:
\begin{compactitem}[$\bullet$]
\item Track the evolution of a single \declare constraint against an evolving trace, providing a fine-grained feedback on how the truth value of the constraint evolves, when tasks are performed. This is done by adopting the RV semantics for \LTLf. Specifically, in \cite{MMW11}, the evolution of \declare constraints through the different RV states is tackled using the ad-hoc ``colored automaton" construction technique that we have formally justified in Section~\ref{sec:colored-proof}.
\item Track the compliance of an evolving trace to the entire \declare model, by considering all
  its constraints together. This is done by constructing the colored automaton for the conjunction of all constraints in the model. Monitoring the evolving trace against such a ``global" automaton is crucial for inferring complex violations that cannot be ascribed to the interaction of the current trace with a single constraint in the model, but arise due to the interplay between the trace and multiple constraints at once. Such violations emerge due to \emph{conflicting constraints}, i.e., constraints that, in the current circumstances, contradict each other and consequently cannot be all satisfied anymore \cite{MWM12}. By considering all constraints together, the presence of this kind of conflict can be detected immediately, without waiting for the later moment when an explicit violation of one of the single constraints involved in the conflict eventually arises. This important feature has been classified as \emph{early detection of violations} in a reference monitoring survey \cite{LMM13}.
\end{compactitem}

%
%

\smallskip
\noindent{\bf Monitoring Declare Constraints with \LDLf.}
Since \LDLf includes \LTLf, \declare constraints can be directly encoded in
\LDLf using their standard formalization \cite{PesV06,MPVC10}.
Thanks
to the translation into {\NFA}s discussed in
Section~\ref{sec:automaton} (and, if needed, their determinization into
corresponding {\DFA}s), the automaton obtained from the \LTLf encoding of a constraint can then be used to check whether a (partial) finite trace satisfies that constraint or
not. This is not very effective, as the approach does not support
the detection of fine-grained truth values, as the four RV ones.

By exploiting Theorem~\ref{thm:rv-ltl}, however, we can reuse the same
technique, this time supporting all RV truth values. In fact, by formalizing
the good prefixes of each \declare pattern, we can immediately
construct the four \LDLf formulae that embed the different RV
truth values, and check the current trace over each of the
corresponding automata. Table~\ref{tab:constraints} reports the good
prefix characterization of some of the \declare patterns; it can be
seamlessly extended to all other patterns as well.

Pragmatically, we can even go a step further, and employ colored automata by following the technique discussed in Section~\ref{sec:colored-proof}. More specifically, given a \declare model $\M$, we proceed as follows:
\begin{compactitem}[$\bullet$]
\item For every constraint $\constraint{c} \in \M$, we derive its \LTLf formula $\varphi_\constraint{c}$, and construct its corresponding deterministic colored automaton $A(\constraint{c})$. This colored automaton, acts as \emph{local monitor} for its constraint $\constraint{c}$. This can be used to track the RV state of $\constraint{c}$ as tasks are executed.
\item We build the \LTLf formula $\Phi_{\M}$ standing for the conjunction of the \LTLf formulae encoding all constraints in $\M$, and construct its corresponding deterministic colored automaton $A(\M)$. This colored automaton acts as \emph{global monitor} for the entire \declare model $\M$. This can be used to track the overall RV state of $\M$ as tasks are executed, and early detect violations arising from conflicting constraints.
\item When the monitoring of a process execution starts, the initial state of each local monitor, as well as that of the global monitor, are outputted.
\item Whenever an event witnessing the execution of a task is tracked, it is delivered to each local monitor and to the global monitor. The new, current state of each monitor is then computed and outputted based on the current state and on the received task name.
\item When the process execution is completed (i.e., not further events are expected to occur), the final state of each monitor is outputted, depending on whether its colored automaton is in an accepting state or not. In particular, if upon completion the colored state of the monitor is $\permtrue$ or $\temptrue$, then the trace is judged as compliant; if, instead, upon completion the colored state of the monitor is $\permfalse$ or $\tempfalse$, then the trace is judged as non-compliant.
\item The global monitor can be inquired to obtain additional information about how the monitored trace interacts with the constraints. For example, when the current state of the global monitor is $\tempfalse$ or $\temptrue$, retrieving the names of tasks whose execution leads to a $\permfalse$ state is useful to return which tasks are currently forbidden by the model. This information is irrelevant when the monitor is in a $\permtrue$ or $\permfalse$ state: by definition, in the first case no task is forbidden, whereas in the latter all tasks are.
\end{compactitem}

\renewcommand{\taskdist}{1.2cm}
\tikzstyle{eventline}=[thick,densely dashed,-]
\tikzstyle{monstate}=[ultra thick]
\newcommand{\monstateh}{.6}

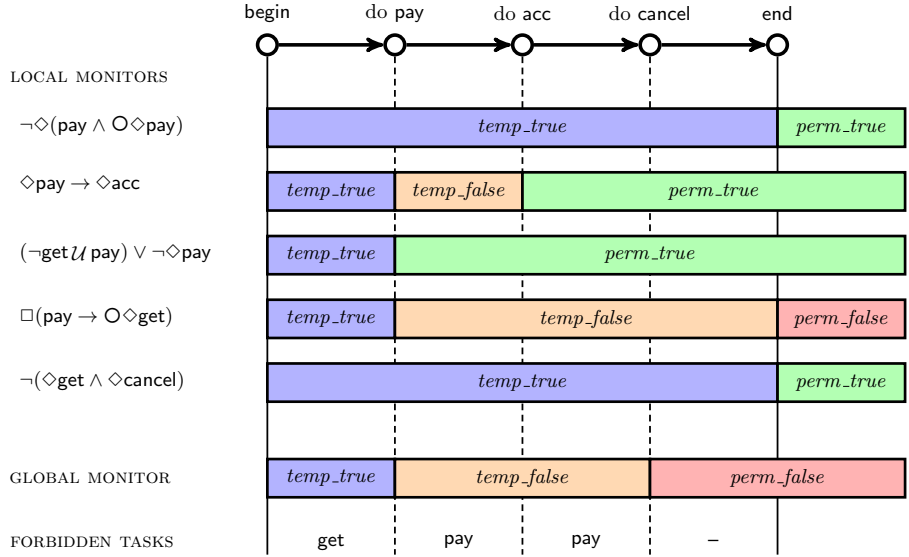
\begin{figure}[t!]
   \centering
%
\resizebox{\textwidth}{!}{
  \begin{tikzpicture}[node distance=\taskdist,y=1cm,x=2cm]

    \node[anchor=west,xshift=-1.5mm] at (0,.5) {\textsc{local monitors}};

    \node[anchor=north west] (pay) at (0,0) {\absencepay};

    \node[,anchor=north west] (pay) at (0,-1) {\respondedexistencepayacc};

    \node[anchor=north west] (pay) at (0,-2) {\precedencepayget
    };

     \node[anchor=north west] (pay) at (0,-3) {\responsepayget
    };

    \node[anchor=north west] (get) at (0,-4) {\notcoexistencegetcancel};

    \node[anchor=west,xshift=-1.6mm] at (0,-5.8) {\textsc{global monitor}};

    \node[dot] (start) at (2,1) {};
    \node at (2,1.5) {\activity{begin}};
    \node[dot] (e1) at (3,1) {};
    \node at (3,1.5) {do \activity{pay}};
    \node[dot] (e2) at (4,1) {};
    \node at (4,1.5) {do \activity{acc}};
    \node[dot] (e3) at (5,1) {};
    \node at (5,1.5) {do \activity{cancel}};
    \node[dot] (end) at (6,1) {};
    \node at (6,1.5) {\activity{end}};
    \path[-stealth',ultra thick]
      (start) edge (e1)
      (e1) edge (e2)
      (e2) edge (e3)
      (e3) edge (end);

    \draw[eventline,solid] (start) edge (2,-7);
    \draw[eventline] (e1) edge (3,-7);
    \draw[eventline] (e2) edge (4,-7);
    \draw[eventline] (e3) edge (5,-7);
    \draw[eventline,solid] (end) edge (6,-7);

    \node[temptruemonstate,fit={(2,0) ($(6,-\monstateh)$)}] {};
    \node[truemonstate,fit={(6,0) ($(7,-\monstateh)$)}] {};

    \node[temptruemonstate,fit={(2,-1) ($(3,-1-\monstateh)$)}] {};
    \node[tempfalsemonstate,fit={(3,-1) ($(4,-1-\monstateh)$)}] {};
    \node[truemonstate,fit={(4,-1) ($(7,-1-\monstateh)$)}] {};

    \node[temptruemonstate,fit={(2,-2) ($(3,-2-\monstateh)$)}] {};
    \node[truemonstate,fit={(3,-2) ($(7,-2-\monstateh)$)}] {};

    \node[temptruemonstate,fit={(2,-3) ($(3,-3-\monstateh)$)}] {};
    \node[tempfalsemonstate,fit={(3,-3) ($(6,-3-\monstateh)$)}] {};
    \node[falsemonstate,fit={(6,-3) ($(7,-3-\monstateh)$)}] {};

    \node[temptruemonstate,fit={(2,-4) ($(6,-4-\monstateh)$)}] {};
    \node[truemonstate,fit={(6,-4) ($(7,-4-\monstateh)$)}] {};

    \node[temptruemonstate,fit={(2,-5.5) ($(3,-5.5-\monstateh)$)}] {};
    \node[tempfalsemonstate,fit={(3,-5.5) ($(5,-5.5-\monstateh)$)}] {};
    \node[falsemonstate,fit={(5,-5.5) ($(7,-5.5-\monstateh)$)}] {};

    \node[anchor=west,xshift=-1.5mm] at (0,-6.8) {\textsc{forbidden tasks}};

    \node at (2.5,-6.8) {\activity{get}};
    \node at (3.5,-6.8) {\activity{pay}};
    \node at (4.5,-6.8) {\activity{pay}};
    \node at (5.5,-6.8) {\activity{--}};

  \end{tikzpicture}
}
\caption{Result computed by monitoring the \declare model of Figure~\ref{fig:declare} against the noncompliant trace $\activity{pay}\cdot\activity{acc}\cdot\activity{cancel}$, considering local monitors for each constraint separately, and the global monitor accounting for all of them at once.\label{fig:monitoring}}
\end{figure}

\begin{example}
\label{ex:monitoring}
Figure~\ref{fig:monitoring} depicts the result computed by monitoring the \declare model introduced in Example~\ref{ex:declare}
and shown in Figure~\ref{fig:declare}, against a trace where a registration is paid, the corresponding regulation is accepted, and then the registration is canceled.

When monitoring starts, all local monitors are in state $\temptrue$, and so is the global monitor. Task \activity{get ticket} is forbidden, since according to the \constraint{precedence} constraint connecting that task to \activity{pay registration} (i.e., formula $\precedencepayget$), a previous execution of \activity{pay registration} is needed.
When the payment is executed:
\begin{compactitem}[$\bullet$]
  \item the local monitor for the \constraint{responded existence} constraint linking \activity{pay registration} to \activity{accept regulation} (i.e., formula \respondedexistencepayacc) moves to $\tempfalse$, because it requires acceptance of the regulation (which has not been done yet);
  \item the local monitor for the \constraint{precedence} constraint linking \activity{get ticket} to \activity{pay registration} (i.e., formula \precedencepayget) moves to $\permtrue$, enabling once and for all the possibility of executing \activity{get ticket};
  \item the local monitor for the \constraint{response} constraint linking \activity{pay registration} to \activity{get ticket} (i.e., formula \responsepayget) moves to $\tempfalse$, because its satisfaction now demands a consequent execution of the \activity{get ticket} task.
\end{compactitem}
The global monitor also moves to $\tempfalse$, since there are two tasks that must be executed to satisfy the \constraint{responded existence} and \constraint{response} constraints, and it is indeed possible to execute them without violating other constraints. At the same time, further payments are now forbidden, due to the \constraint{absence 2} constraint attached to the \activity{pay registration} task (i.e., formula \absencepay).

The consequent execution of \activity{accept regulation} turns the state of the \constraint{responded existence} constraint linking \activity{pay registration} to \activity{accept regulation} (i.e., formula \respondedexistencepayacc) to $\permtrue$: since the regulation has now been accepted, the constraint is satisfied and will stay so no matter how the execution is continued.

The most interesting transition is the one triggered by the consequent execution of the \activity{cancel registration} task. While this event does not trigger any state change in the local monitors, it actually induces a transition of the global monitor to the permanent, $\permfalse$ RV state. In fact, no continuation of the trace will be able to satisfy all constraints of the considered model. More specifically, the sequence of events received so far induces a so-called \emph{conflict} \cite{MWM12} for the \constraint{response} constraint linking \activity{pay registration} to \activity{get ticket} (i.e., formula \responsepayget), and the \constraint{not coexistence} constraint relating \activity{get ticket} and \activity{cancel registration} (i.e., formula \notcoexistencegetcancel). In fact, the \constraint{response} constraint requires a future execution of the \activity{get ticket} task, which is however forbidden by the \constraint{not coexistence} constraint. Consequently, no continuation of the current trace will satisfy both constraints at once.

Since no further task execution actually happens, the trace is finally declared to be complete, with no execution of the \activity{get ticket} task. This has the effect of respectively moving the \constraint{response} and \constraint{not coexistence} constraints to $\permfalse$ and $\permtrue$. Also the \constraint{absence 2} constraint on payment becomes $\permtrue$, witnessing that no double payment occurred in the trace.
\end{example}

\section{Modeling and monitoring metaconstraints}
\label{sec:monitoring-metaconstraints}
In Section~\ref{sec:rtm}, we have demonstrated that \LDLf has the ability of expressing formulae that capture the RV state of other formulae. This can be interpreted as the ability of \LDLf to express meta-level properties of \LDLf constraints within the logic itself. Such properties, which we call \emph{metaconstraints}, can, in turn, be themselves monitored using the automata-theoretic approach described in Section~\ref{sec:rtm}.

In this section we elaborate on this observation, discussing how metaconstraints can be built, and illustrating interesting metaconstraint patterns.

\subsection{Modeling Metaconstraints}
Theorem~\ref{thm:rv-ltl} shows that, for an arbitrary \LDLf formula $\varphi$, four \LDLf formulae can be automatically constructed to express whether $\varphi$ is in one of the four RV states. Consequently, given $s \in \set{\temptrue,\tempfalse,\true,\false}$, and an \LTLf/\LDLf formula $\varphi$, we can consider formulae of the form $\rvass{\varphi}{s}$ as special atoms of the logic itself.

Such special atoms are used to check whether a trace brings $\varphi$ in state $s$. However, they cannot be used to explicitly characterize which are the paths that lead $\varphi$ to RV state $s$, i.e., that make formula $\rvass{\varphi}{s}$ true. Such paths can be readily obtained by constructing the regular expression for language $\L(\rvass{\varphi}{s})$, which we denote as $re_{\rvass{\varphi}{s}}$. For example, $re_{\rvass{\varphi}{\permfalse}} = \L(\DIAM{\pref_{\lnot\varphi}}\Endt\land \lnot
\DIAM{\pref_{\varphi}}\Endt)$ describes all paths culminating in a permanent violation of $\varphi$.

With these notions at hand, we can build \LTLf/\LDLf metaconstraints as standard \LTLf/\LDLf formulae that include:
\begin{compactitem}[$\bullet$]
\item formulae of the form $\rvass{\varphi}{s}$ as atoms;
\item formulae of the form $re_{\rvass{\varphi}{s}}$ as path expressions.
\end{compactitem}
A metaconstraint is then translated back into a standard \LDLf formula by replacing each sub-formula of the form $\rvass{\varphi}{s}$ with its corresponding \LDLf formula according to Theorem~\ref{thm:rv-ltl}, and each sub-formula of the form $re_{\rvass{\varphi}{s}}$ with its corresponding regular expression. A direct (non-optimized) way to calculate the regular expression for $re_{\rvass{\varphi}{s}}$ is to construct the automaton for $\rvass{\varphi}{s}$, and then to fold this automaton back into a regular expression (using standard techniques).

\subsection{Some Relevant Metaconstraint Patterns}

We present three types of metaconstraints, demonstrating the sophistication and versatility of the resulting framework.

\medskip
\noindent
\emph{Contextualizing constraints.} This type of metaconstraint is used to express that a constraint must hold \emph{while} another constraint is in some RV state.
The latter constraint, together with the specified state, consequently provides a monitoring \emph{context} for the former, \emph{contextualized} constraint.

Let us specifically consider the case of a \emph{contextualized absence}, where, given a task $\activity{a}$, the contextualized constraint has the form $\Box\neg\activity{a}$, and the context is provided by an arbitrary constraint $\varphi$ being in a given RV state $s$. This is formalized as:
\begin{equation}
  \BOX{re_{\rvass{\varphi}{s}}} (\neg \activity{a} \lor \Endt)
  \label{eq:context}
\end{equation}
where $\Endt$ denotes the end of the trace, as defined in Section~\ref{sec:LTLf-LDLf}; this is needed since, in \LDLf, $\neg \activity{a}$ expresses that some task different than $\activity{a}$ is executed, while we also want to accept the case where no task is performed at all (and the trace completes).
The idea of formula \eqref{eq:context} is to relativize the unrestricted $\Box$ operator to all and only those paths leading to RV state $s$ for $\varphi$, which are, in turn, characterized by the regular expression $re_{\rvass{\varphi}{s}}$.

A monitor for formula \eqref{eq:context} returns $\temptrue$ either when $\varphi$ is not in state $s$, but may evolve into such a state, or when $\varphi$ is in state $s$. In the latter situation, by inspecting the monitor one can see that task $\activity{a}$ if forbidden; this also means that upon the execution of $\activity{a}$, the monitor evolves into $\permfalse$. Finally, the monitor returns $\permtrue$ if $\varphi$ is not  in state $s$, and cannot enter into state $s$ in the future, no matter how the trace is continued.

\begin{example}
  \label{ex:context}
  Consider the constraint model in Figure~\ref{fig:declare}. We now want to express that it is not possible to get the ticket after the payment is done, until the regulation is accepted (if it was accepted before, no restriction applies). This can be seen as a \emph{contextualized absence} constraint forbidding \activity{get ticket} when the \constraint{responded existence} that links \activity{pay registration} to \activity{accept regulation} (i.e., formula \respondedexistencepayacc) is \emph{temporarily violated}.  Formally, to encode this, we instantiate formula \eqref{eq:context} into:
  \[
    \BOX{re_{\rvass{\{\Diamond \activity{pay} \limp \Diamond \activity{acc}\}}{\tempfalse}}}
    (\neg \activity{get} \lor \Endt)
  \]
  which, in turn, expands into:
  \[
    \BOX{(\neg \activity{pay})^*;\activity{pay};(\neg\activity{acc})^*}
    (\neg \activity{get} \lor \Endt)
  \]
\end{example}

\medskip
\noindent
\emph{Compensation constraints.} In  general terms, compensation refers to a behavior that has to be enforced when the current execution reaches an unexpected/undesired state. In our setting, the undesired state triggering a compensation is the \emph{permanent violation} of a property that captures a desired behavior, which, in turn, triggers the fact that \emph{another} formula, capturing the compensating behavior, has to be satisfied. We call the first formula the \emph{default constraint}, and the second formula its \emph{compensating constraint}.

Let us consider the general case of a default \LDLf constraint $\varphi$, and a compensating \LDLf constraint $\psi$. By noticing that once a trace permanently violates a constraint, then every possible continuation still permanently violates that constraint, we capture the \emph{compensation} of $\varphi$ by $\psi$ as:
\begin{equation}
  \rvass{\varphi}{\permfalse} \limp \psi
  \label{eq:compensation}
\end{equation}
The intuitive interpretation of formula \eqref{eq:compensation} is that either $\varphi$ never enters into the $\permfalse$ RV state, or $\psi$ holds. No requirement is placed regarding \emph{when} $\psi$ should be monitored in case $\varphi$ gets permanently violated. In fact, the overall compensation formula \eqref{eq:compensation} gets temporarily/permanently satisfied even when the compensating constraint $\psi$ is temporarily/permanently satisfied \emph{before} the moment when the default constraint $\varphi$ gets permanently violated. This may sound counterintuitive, as it is usually intended that the compensating behavior has to be exhibited \emph{as a reaction} to the violation. We can capture this intuition by turning formula \ref{eq:compensation} into the following \emph{reactive compensation} formula:
\begin{equation}
  \rvass{\varphi}{\permfalse} \limp \DIAM{re_{\rvass{\varphi}{\permfalse}}}\psi
  \label{eq:compensation-refined}
\end{equation}
This formula imposes that, in case of a permanent violation of $\varphi$, the compensating constraint $\psi$ must hold \emph{after} $\varphi$ has become permanently violated.

Assuming that $\varphi$ can be potentially violated (which is the reason why we want to express a compensation), a monitor for formula \eqref{eq:compensation-refined} starts by emitting $\temptrue$. As soon as the monitored execution is so that $\varphi$ cannot be permanently violated anymore, the monitor switches to $\permtrue$. If instead the monitored execution leads to permanently violate $\varphi$, from the moment of the violation onwards, the evolution of the monitor follows that of $\psi$.

\begin{example}
\label{ex:compensation}
  Consider the \constraint{not coexistence} constraint in Figure~\ref{fig:declare}. We now want to model that, whenever this constraint is permanently violated, that is, whenever a ticket is retrieved and the registration is canceled, then a \activity{return ticket} (\activity{return} for short) task must  be executed. This has to occur in  reaction to the permanent violation. Hence, we rely on template \eqref{eq:compensation-refined} and instantiate it into:
  \[
      \rvass{\{\neg(\Diamond\activity{get}\land\Diamond\activity{cancel})\}}{\permfalse} \limp \DIAM{re_{\rvass{\{\neg(\Diamond\activity{get}\land\Diamond\activity{cancel})\}}{\false}}}\Diamond \activity{return}
  \]
  This formula is equivalent to
  \[
    (\Diamond\activity{get}\land\Diamond\activity{cancel})\limp \DIAM{re_{\{\Diamond\activity{get}\land\Diamond\activity{cancel}\}}}\Diamond \activity{return}
  \]
  which, in turn, becomes
\newcommand{\myregexp}{
\begin{array}{@{}l@{}l@{}}
&(o^*;\activity{get};(\neg \activity{cancel})^*;\activity{cancel};\true^*)\\
+&(o^*;\activity{cancel};\neg \activity{get}^*;\activity{get};\true^*)
\end{array}
}
  \[
    (\Diamond\activity{get}\land\Diamond\activity{cancel}) \limp
     \left\langle\myregexp\right\rangle\Diamond \activity{return}
  \]
  where $o$ is a shortcut notation for any task different than $\activity{get}$ and $\activity{cancel}$.
\end{example}

\medskip
\noindent
\emph{Constraint priority for conflict resolution.} Thanks to the fact that RV states take into considerations all possible future evolution of a monitored execution, our framework handles the subtle situation where the execution reaches a state of affairs in which the conjunction of two constraints is permanently violated, while none of the two is so if considered in isolation. This situation of \emph{conflict} has been already recalled in Section~\ref{sec:declare} in the case of \declare. A situation of conflict involving two constraints $\varphi$ and $\psi$ witnesses that even though none of $\varphi$ and $\psi$ is permanently violated, they contradict each other, and hence in every possible future course of execution at least one of them will eventually become permanently violated. In such a state of affairs, it may become relevant to specify which ones of the two constraints has \emph{priority} over the other, that is, which one should be preferably satisfied.

 Formally, a trace culminates in a conflict for two \LDLf constraints $\varphi$ and $\psi$ if it satisfies the following metaconstraint:
\begin{equation}
      \rvass{\{\varphi\land\psi\}}{\permfalse} \land \neg \rvass{\varphi}{\permfalse}
      \land \neg \rvass{\psi}{\permfalse}  \label{eq:conflict}
  \end{equation}
Specifically, assuming that $\varphi$ and $\psi$ can potentially enter into a conflict, a monitor for formula \eqref{eq:conflict} proceeds as follows:
\begin{compactitem}[$\bullet$]
\item Initially, the monitor outputs $\tempfalse$, witnessing that no conflict has been seen so far, but it may actually occur in the future.
\item From this initial situation, the monitor can evolve in one of the following two ways:
\begin{compactitem}[$-$]
\item the monitor turns to $\permfalse$, witnessing that from this moment on neither of the two constraints will ever be violated anymore, irrespectively of how the trace continues;
\item the monitor turns to $\temptrue$, whenever the monitored execution indeed culminates in a conflict -- this witnesses that a conflict is currently in place.
\end{compactitem}
\item From the latter situation witnessing the presence of a conflict, the monitor evolves then to $\permfalse$ when one of the two constraints indeed becomes permanently violated; this witnesses that the conflict is not anymore in place, due to the fact that now the permanent violation can actually be ascribed to one of the two constraints taken in isolation from the other.
\end{compactitem}
Using this monitor, we can identify all points in the trace where a conflict is in place by simply checking when the monitor returns $\temptrue$.

Notice that the monitor never outputs $\permtrue$, since a conflicting situation will always eventually permanently violate $\varphi$ or $\psi$, in turn, permanently violating \eqref{eq:conflict}. In addition, the notion of conflict defined in formula \eqref{eq:conflict} is inherently ``non-monotonic'', as it ceases to exist as soon as one of the two involved constraints becomes permanently violated alone.
This is the reason why we cannot directly employ formula \eqref{eq:conflict} as a basis to define which constraint we \emph{prefer} over the other when a conflict arises. To declare that $\varphi$ is \emph{preferred over} $\psi$, we then relax formula \eqref{eq:conflict} by simply considering the violation of the composite constraint $\varphi \land \psi$, which may occur due to a conflict or due to the permanent violation of one of the two constraints $\varphi$ and $\psi$.
We then create a formula expressing that whenever the composite constraint is violated, then we want to satisfy the preferred constraint $\varphi$:
 \begin{equation}
  \DIAM{re_{\rvass{\{\varphi\land\psi\}}{\permfalse}}} \ttrue \limp \varphi
  \label{eq:preference}
\end{equation}
This pattern can be generalized to conflicts involving $n$ formulae, using their proper maximal subsets as building blocks.
In the typical situation where a permanent violation of $\varphi \land \psi$ does not manifest itself at the beginning of the trace, but may indeed occur in the future, a monitor for \eqref{eq:preference} starts by emitting $\temptrue$. When the composite constraint $\varphi \land \psi$ becomes permanently violated (either because of a conflict, or because of a permanent violation of one of its components), formula $\rvass{\{\varphi\land\psi\}}{\permfalse}$ turns to $\permtrue$, and the monitor consequently switches to observe the evolution of $\varphi$ (that is, of the head of the implication in \eqref{eq:preference}).

\newcommand{\myconflictregexp}{(o^*;\activity{pay};(o+\activity{pay})^*;\activity{cancel};(\neg \activity{get})^*)+(o^*;\activity{cancel};(o+\activity{cancel})^*;\activity{pay};(\neg \activity{get})^*)}

\newcommand{\myrelaxedconflictregexp}{
\begin{array}{@{}l@{}l@{}}
&(o^*;\activity{pay};(\neg \activity{cancel})^*;\activity{cancel};(\true)^*)
\\
+&
(o^*;\activity{get};(\neg \activity{cancel})^*;\activity{cancel};(\true)^*)\\
+&
(o^*;\activity{cancel};(\activity{cancel}+o)^*;(\activity{get}+\activity{pay});(\true)^*)
\end{array}
}
\begin{example}
\label{ex:conflict}
  Consider again Figure~\ref{fig:declare}, and in particular the \constraint{response} and \constraint{not coexistence} constraints respectively linking \activity{pay registration} to \activity{get ticket}, and \activity{get ticket} to \activity{cancel registration}, which we compactly refer to as $\psi_r$ and $\varphi_{nc}$. These two constraints conflict when a registration is paid and canceled, but the ticket is not retrieved (this would indeed lead to a permanent violation of $\varphi_{nc}$ alone). Let $o$ denote any task that is different from $\activity{pay}$, $\activity{get}$, and $\activity{cancel}$. The traces that culminate in a conflict for $\psi_r$ and $\varphi_{nc}$ are those that satisfy the regular expression:
\begin{equation}
\myconflictregexp
\label{eq:myconflict}
\end{equation}
Recall that, as specified in Section~\ref{sec:LTLf-LDLf}, testing whether a trace satisfies this regular expression can be done by encoding it in \LDLf as:
\begin{equation}
\DIAM{\myconflictregexp} \Endt
\label{eq:myconflict-check}
\end{equation}

We want now express that we prefer the \constraint{not coexistence} constraint over the \constraint{response} one, i.e., that, upon cancelation, the ticket should not be retrieved even if the payment has been done. To this end, we first notice that, for an evolving trace, the composite constraint $\psi_r \land \varphi_{nc}$ is permanently violated either when $\varphi_{nc}$ is so, or when a conflict arise. The first situation arises when the trace contains both $\activity{cancel}$ and $\activity{get}$ (in whatever order), whereas the second arises when the trace contains both $\activity{cancel}$ and $\activity{pay}$ (in whatever order). Consequently, we have that $re_{\rvass{\{\varphi_{nc}\land\psi_r\}}{\permfalse}}$ corresponds to the regular expression:
\[\myrelaxedconflictregexp\]
We then use this regular expression together with $\varphi_{nc}$ to instantiate formula \eqref{eq:preference} as follows:
\[
  \left\langle\myrelaxedconflictregexp\right\rangle\ttrue
   \limp \notcoexistencegetcancel
\]
\end{example}
We conclude by showing the evolution of the monitors for the metaconstraints discussed in the various examples of this section.

\renewcommand{\taskdist}{1.2cm}
\tikzstyle{eventline}=[thick,densely dashed,-]
\tikzstyle{monstate}=[ultra thick]
\renewcommand{\monstateh}{.6}

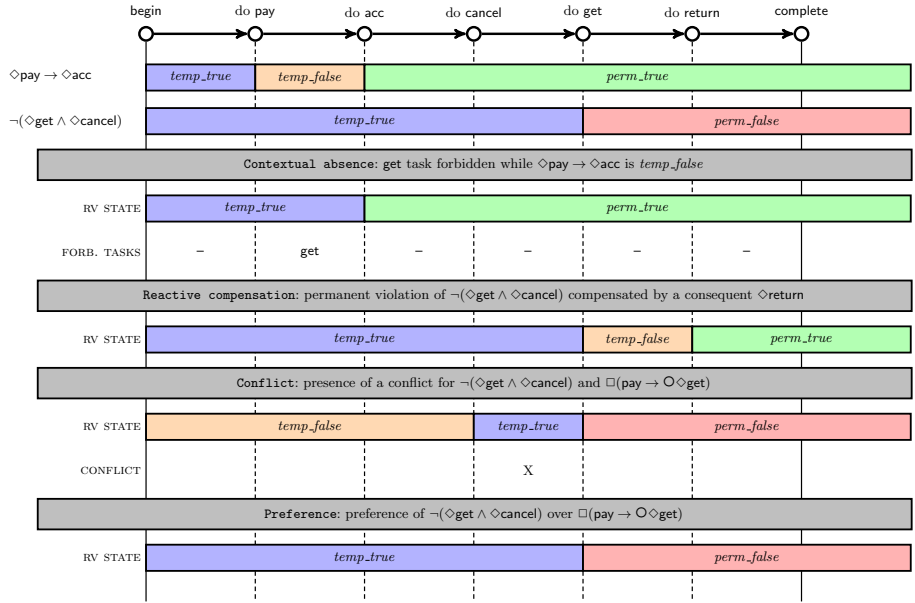
\begin{figure}[t!]
   \centering
%
\resizebox{\textwidth}{!}{
  \begin{tikzpicture}[node distance=\taskdist,y=1cm,x=2.5cm]
    \node[dot] (start) at (0,1) {};
    \node at (0,1.5) {\activity{begin}};
    \node[dot] (e1) at (1,1) {};
    \node at (1,1.5) {do \activity{pay}};
    \node[dot] (e2) at (2,1) {};
    \node at (2,1.5) {do \activity{acc}};
    \node[dot] (e3) at (3,1) {};
    \node at (3,1.5) {do \activity{cancel}};
    \node[dot] (e4) at (4,1) {};
    \node at (4,1.5) {do \activity{get}};
    \node[dot] (e5) at (5,1) {};
    \node at (5,1.5) {do \activity{return}};
    \node[dot] (end) at (6,1) {};
    \node at (6,1.5) {\activity{complete}};
    \path[-stealth',ultra thick]
      (start) edge (e1)
      (e1) edge (e2)
      (e2) edge (e3)
      (e3) edge (e4)
      (e4) edge (e5)
      (e5) edge (end);

    \draw[eventline,solid] (start) edge (0,-12);
    \draw[eventline] (e1) edge (1,-12);
    \draw[eventline] (e2) edge (2,-12);
    \draw[eventline] (e3) edge (3,-12);
    \draw[eventline] (e4) edge (4,-12);
    \draw[eventline] (e5) edge (5,-12);
    \draw[eventline,solid] (end) edge (6,-12);

  \node[anchor=west] at (-1.3,0) {\respondedexistencepayacc};
   \node[temptruemonstate,fit={(0,.5*\monstateh) (1,-.5*\monstateh)}] {};
   \node[tempfalsemonstate,fit={(1,.5*\monstateh) (2,-.5*\monstateh)}] {};
    \node[truemonstate,fit={(2,.5*\monstateh) (7,-.5*\monstateh)}] {};

  \node[anchor=west] at (-1.3,-1) {\notcoexistencegetcancel};
   \node[temptruemonstate,fit={(0,-1+.5*\monstateh) (4,-1-.5*\monstateh)}] {};
   \node[falsemonstate,fit={(4,-1+.5*\monstateh) (7,-1-.5*\monstateh)}] {};

  \node[fill=lightgray,minimum width=8*2.5cm,draw,very thick,anchor=west,minimum height=7mm] at (-1,-2){};
   \node[anchor=center] at (3,-2) {\constraint{Contextual absence}: $\activity{get}$ task forbidden while $\respondedexistencepayacc$ is $\tempfalse$
    };

  \node[anchor=east] at (0,-3) {\textsc{rv state}};
  \node[temptruemonstate,fit={(0,-3+.5*\monstateh) (2,-3-.5*\monstateh)}] {};
  \node[truemonstate,fit={(2,-3+.5*\monstateh) (7,-3-.5*\monstateh)}] {};

   \node[anchor=east] at (0,-4) {\textsc{forb.~tasks}};
    \node at (.5,-4) {\activity{--}};
    \node at (1.5,-4) {\activity{get}};
    \node at (2.5,-4) {\activity{--}};
    \node at (3.5,-4) {\activity{--}};
    \node at (4.5,-4) {\activity{--}};
    \node at (5.5,-4) {\activity{--}};

  \node[fill=lightgray,minimum width=8*2.5cm,draw,very thick,anchor=west,minimum height=7mm] at (-1,-5){};
   \node[anchor=center] at (3,-5) {
      \constraint{Reactive compensation}: permanent violation of $\notcoexistencegetcancel$ compensated by a consequent $\Diamond \activity{return}$
    };
  \node[anchor=east] at (0,-6) {\textsc{rv state}};
  \node[temptruemonstate,fit={(0,-6+.5*\monstateh) (4,-6-.5*\monstateh)}] {};
  \node[tempfalsemonstate,fit={(4,-6+.5*\monstateh) (5,-6-.5*\monstateh)}] {};
  \node[truemonstate,fit={(5,-6+.5*\monstateh) (7,-6-.5*\monstateh)}] {};

  \node[fill=lightgray,minimum width=8*2.5cm,draw,very thick,anchor=west,minimum height=7mm] at (-1,-7){};
   \node[anchor=center] at (3,-7) {
      \constraint{Conflict}: presence of a conflict for $\notcoexistencegetcancel$ and $\responsepayget$
    };
\node[anchor=east] at (0,-8) {\textsc{rv state}};
  \node[tempfalsemonstate,fit={(0,-8+.5*\monstateh) (3,-8-.5*\monstateh)}] {};
  \node[temptruemonstate,fit={(3,-8+.5*\monstateh) (4,-8-.5*\monstateh)}] {};
  \node[falsemonstate,fit={(4,-8+.5*\monstateh) (7,-8-.5*\monstateh)}] {};

  \node[anchor=east] at (0,-9) {\textsc{conflict}};
    \node at (.5,-9) {};
    \node at (1.5,-9) {};
    \node at (2.5,-9) {};
    \node at (3.5,-9) {X};
    \node at (4.5,-9) {};
    \node at (5.5,-9) {};

  \node[fill=lightgray,minimum width=8*2.5cm,draw,very thick,anchor=west,minimum height=7mm] at (-1,-10){};
   \node[anchor=center] at (3,-10) {
      \constraint{Preference}: preference of $\notcoexistencegetcancel$ over $\responsepayget$
    };
  \node[anchor=east] at (0,-11) {\textsc{rv state}};
  \node[temptruemonstate,fit={(0,-11+.5*\monstateh) (4,-11-.5*\monstateh)}] {};
  \node[falsemonstate,fit={(4,-11+.5*\monstateh) (7,-11-.5*\monstateh)}] {};

  \end{tikzpicture}
}
\caption{Result computed by monitoring the metaconstraints in Examples~\ref{ex:context}, \ref{ex:compensation}, and \ref{ex:conflict} against the trace $\activity{pay}\cdot\activity{acc}\cdot\activity{cancel}\cdot\activity{get}\cdot\activity{return}$; for readability, we also report the evolution of the monitors for the constraints mentioned by the metaconstraints. \label{fig:monitoring-metaconstraints}}
\end{figure}

\begin{example}
  Figure~\ref{fig:monitoring-metaconstraints} reports the result computed by the monitors for the metaconstraints discussed in Examples~\ref{ex:context}, \ref{ex:compensation}, and \ref{ex:conflict} on a sample trace. When the payment occurs, the \constraint{contextual absence} constraint forbids to get tickets. The prohibition is then permanently removed upon the consequent acceptance of the regulation, which ensures that the selected context will never appear again.

  The execution of the third step, consisting in the cancelation of the order, induces a conflict for $\notcoexistencegetcancel$ and $\responsepayget$, since they respectively forbid and require to eventually  get the ticket. The monitor for the \constraint{conflict} metaconstraint witnesses this by switching to $\temptrue$. The \constraint{preference} stays instead $\temptrue$, but while up to this point it was emitting $\temptrue$ because no conflict had occurred yet, it now emits $\temptrue$ because this is the current RV state of the preferred, \constraint{not coexistence} constraint.

  The execution of the \activity{get ticket} task induces a permanent violation for constraint $\notcoexistencegetcancel$, which, in turn, triggers a number of effects:
  \begin{compactitem}
  \item Since the \constraint{preference} metaconstraint is now following the evolution of the preferred constraint $\notcoexistencegetcancel$, it also moves to $\permfalse$.
  \item The conflict is not present anymore and will never be encountered again, given that one of its two constraints is permanently violated on its own. Thus, the monitor for the \constraint{conflict} metaconstraint turns to $\permfalse$.
  \item The \constraint{reactive compensation} is triggered by the permanent violation of $\notcoexistencegetcancel$, and asserts that, from now on, the compensating constraint $\Diamond \activity{return}$ must be satisfied; since the ticket is yet to be returned, the metaconstraint turns to $\tempfalse$.
  \end{compactitem}

  The execution of the last step, consisting in returning the ticket, has the effect of permanently satisfying the \constraint{compensation} metaconstraint, which was indeed waiting for this task to occur.
  \end{example}

\section{Implementation} \label{sec:implementation}
The entire approach has been implemented as an \emph{operational decision support (OS) provider} for the \textsc{ProM} 6 process mining
framework\footnote{\url{http://www.promtools.org/prom6/}} called \textsc{LDL Monitor}. \textsc{ProM} 6 provides a generic OS environment~\cite{Westergaard2011:OS,Maggi2017} that supports the interaction between an external workflow management system at runtime (producing events) and \textsc{ProM}. In Section~\ref{general}, we will sketch some relevant aspects of the general architecture of the OS backbone implemented inside ProM 6. In Section~\ref{interaction}, we ground the discussion to the specific case
of the \textsc{LDL Monitor}, discussing the skeleton of our compliance verification OS Provider. The data exchanged between the \textsc{LDL Monitor} client and provider
is illustrated in Section~\ref{data}. In Section~\ref{client}, we describe the implemented \textsc{LDL Monitor} client. At the back-end of the \textsc{LDL Monitor}, there is a software module
specifically dedicated to the construction and manipulation of {\NFA}s
from \LDLf/\LTLf formulae (detailed in Section~\ref{sec:backend}), concretely implementing the technique presented
in Section~\ref{sec:automaton}. This software is called FLLOAT, which stands for ``From \LTLf/\LDLf To AuTomata'', its code is open source and publicly available at \url{https://github.com/RiccardoDeMasellis/FLLOAT}.

\begin{figure}[t]
\centering
\includegraphics[width = 1.\textwidth]{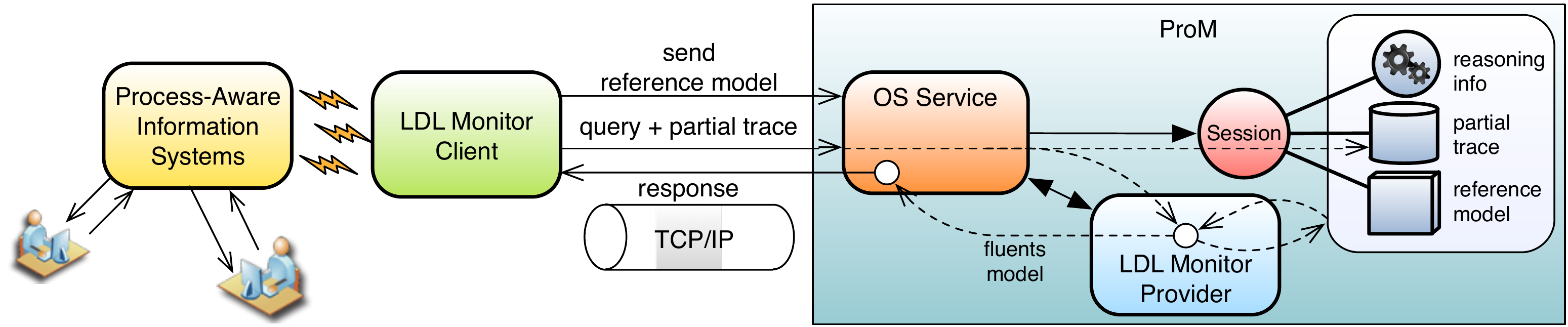}
\caption{ProM OS backbone architecture.}
\vspace{-0.3cm}
\label{archi}
\end{figure}




\subsection{General Architecture}
\label{general}
The ProM OS architecture (shown in Figure~\ref{archi}) relies on the well-known client-server paradigm \cite{DBLP:conf/fase/MaggiMA12}.
More specifically, the ProM OS Service manages the
interaction with running process instances and acts as a mediator between them
and the registered specific OS providers.
Sessions are created and handled by the OS Service to maintain the state of the interaction with each running client.
To establish a stateful connection with
the OS Service, the client creates a session handle for each managed
running process instance, by providing host
and port of the OS Service.
When the client sends a first query related to one of such running
instances to the OS Service, it specifies information related to the
initialization of the connection (such as reference models,
configuration parameters, etc.) and to the type of the queries that
will be asked during the execution. This latter
information will be used by the OS Service to select, among the
registered active providers, the ones that can answer the received
query. The session handle takes care of the interaction with the
service from the client point of view, hiding the connection details
and managing the information passing in a lazy way. The interaction
between the handle and the service takes place over a TCP/IP connection.

\subsection{\textsc{LDL Monitor} Skeleton}
\label{interaction}
In the \textsc{LDL Monitor}, the interaction between a client and the OS
Service mainly consists of two aspects.
First of all, before starting the runtime compliance verification task, the
  client sends to the OS Service the \LDLf reference model to be used.
  This model is then placed inside the session by the OS service. The
  reference model is a set of \LDLf constraints represented as strings.
The client can also set further information and properties. For
example, each constraint in the \LDLf reference model can be
associated to a specific weight, that can be then exploited to compute
metrics and indicators that measure the degree
of adherence of the running instance to the reference model.
Secondly, during the execution, the client sends queries about the
  current monitoring status for one of the managed process instances. The session handle augments these queries
with the partial execution trace containing the
  evolution that has taken place for the process instance after the last request.
The OS Service handles a query by first storing the
events received from the client, and then invoking the \textsc{LDL Monitor}
provider.
The \textsc{LDL Monitor} provider recognizes whether it is being invoked
for the first time with respect to~that process instance. If this is the case,
it takes care of translating the reference model onto the underlying
formal representation.
The provider then returns a fresh result to the client, exploiting a
reasoning component for the actual result's computation. The reasoning component behind the provider is described in Section \ref{sec:backend}.
After each query, the generated result is sent back to the OS Service, which possibly
combines it with the results produced by other relevant providers,
finally sending the global response back to the client.

\begin{figure}[t]
\centering
\includegraphics[width = 0.89\textwidth]{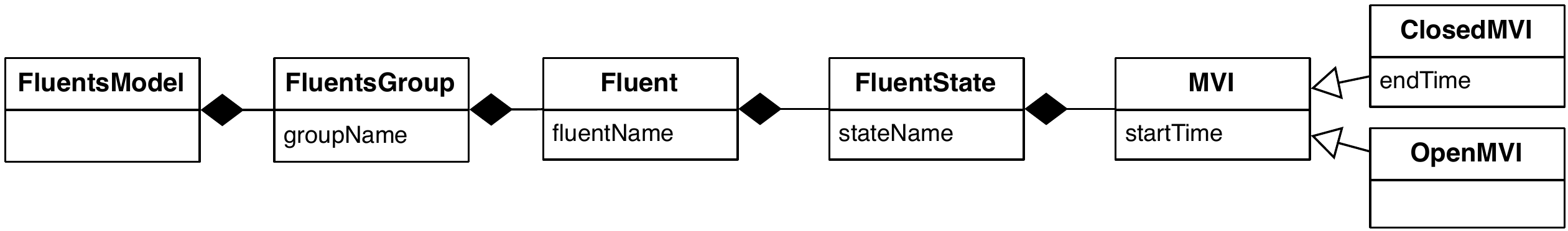}
\caption{Fluent model used to store the evolution of constraints.\label{fig:fluentsModel}\vspace{-0.3cm}}
\end{figure}

\subsection{Exchanged Data and Business Constraints States}
\label{data}
We now discuss the data exchanged by the \textsc{LDL Monitor} client and
provider.
The partial execution traces sent by the client to the OS use the
XES format (\url{www.xes-standard.org/}) for event data.
XES is an extensible XML-based standard recently adopted
by the IEEE task force on process mining.
The response produced by the \textsc{LDL Monitor} provider is composed of
two parts. The first part contains the temporal information related to the
evolution of each monitored business constraint from the beginning of
the trace up to now. At each time point, a constraint can be in one
state, which models whether it is currently: \emph{(permanently) satisfied}, i.e., the current execution trace complies with the
  constraint; \emph{possibly satisfied}, i.e., the current execution trace is compliant with the constraint, but it is possible to violate it
  in the future;
\emph{(permanently) violated}, i.e., the process instance is not compliant with
  the constraint; \emph{possibly violated}, i.e., the current execution trace is
  not compliant with the constraint, but it is possible to satisfy it
  by generating some sequence of events.
This state-based evolution is encapsulated in a \emph{fluent model}
which obeys to the schema sketched in Figure~\ref{fig:fluentsModel}.
A fluent model aggregates fluents groups, containing
sets of correlated fluents. Each fluent models a multi-state property that changes
over time. In our setting, fluent names refer to the
constraints of the reference model. The fact that the constraint was
in a certain state along a (maximal) time interval is modeled by
associating a closed MVI (Maximal Validity Interval) to that state. MVIs
are characterized by their starting and ending
timestamps. Current states are associate to open MVIs, which have an initial fixed timestamp but an
end that will be bounded to a currently unknown future
value.

\begin{figure}[t!]
\centering
\includegraphics[scale=0.7]{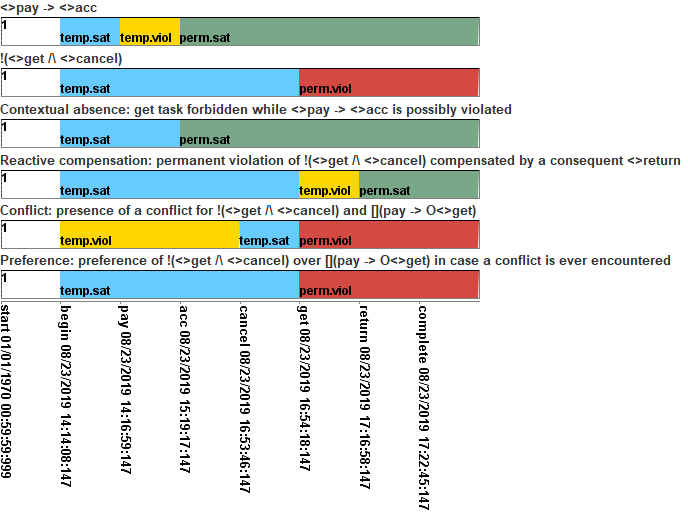}
\caption{Screenshot of one of the \textsc{LDL Monitor} clients.\label{fig:client}\vspace{-0.3cm}}
\end{figure}

\subsection{LDL Monitor Client}
\label{client}
We have developed two \textsc{LDL Monitor} clients, in order to deal with different settings:
(a) replay of a process instance starting from a complete event log, and (b) acquisition of events from an information system.
The first client is mainly used for testing and experimentation.
The second client requires a connection to some information system, e.g., a workflow management system.
The two clients differ on how the user is going to provide the stream of events,
but they both include an interface with a graphical representation of the obtained fluent model,
showing the evolution of constraints and also reporting the trend of the compliance indicator. Figure~\ref{fig:client} shows the interface running with the example in Figure~\ref{fig:monitoring-metaconstraints}.


\subsection{Reasoning Component}
\label{sec:backend}
The reasoning component of the \textsc{LDL Monitor}, FLLOAT (``From \LTLf/\LDLf To AuTomata'', \url{https://github.com/RiccardoDeMasellis/FLLOAT}) implements the logics of the runtime verification by building the automaton of the reference model \LDLf constraints with the algorithms presented in Section~\ref{sec:automaton}.

\begin{figure}
\includegraphics[scale=1]{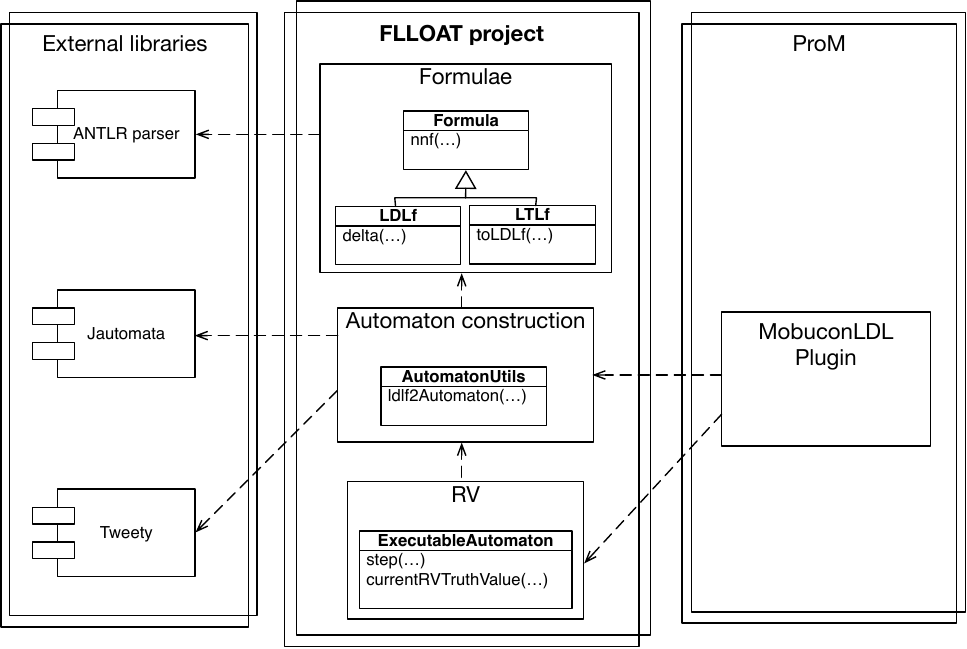}
\centering
\caption{UML-like diagram of the backend main components.}
\label{fig:conceptual-modules}
\end{figure}

The FFLOAT code has been implemented in the Java language and exploits the inheritance features of object oriented languages. It is made up by several conceptual modules and makes use of external libraries as shown by the UML-like diagram in Figure~\ref{fig:conceptual-modules}, where the main java classes are depicted by the usual rectangles, their surrounding boxes represent the conceptual modules they belong to and dashed arrows show the dependencies. In what follows we will address each conceptual module separately.

\begin{figure} 
\includegraphics[scale=.3]{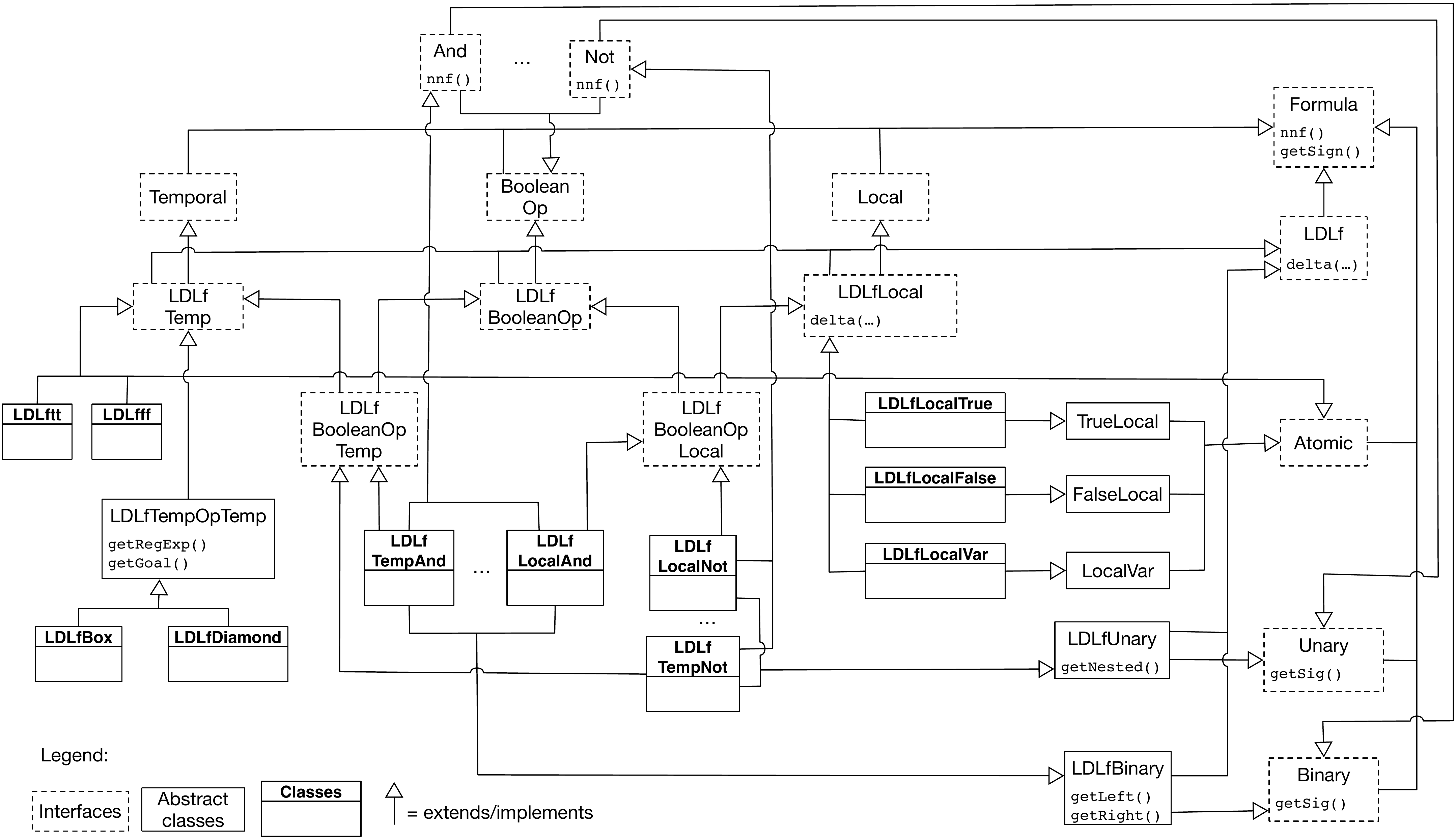}
\centering
\caption{UML-like diagram of the classes for \LDLf formulae.} 
\label{fig:uml-formula}
\end{figure}

\paragraph{Formulae} It contains classes and methods to represent and manipulate logical formulae. Classes in this module have a complex hierarchy, as formulae are characterized by several independent aspects: the language (\LDLf, \LTLf, \RE); the structure (atomic, unary, binary) and the temporal characterization (local or temporal). Since Java does not allow multiple inheritance, such a hierarchy has been reproduced by a suitable use of subclasses and interfaces. Besides, formulae are implemented with an inductive structure, i.e., formulae have as instance variable a formula if unary, or two if binary, which allows us to elegantly implement all recursive functions for their manipulation.
Each formula implements the interface \texttt{Formula}, which is then extended by six interfaces, each representing a specific characteristic: \texttt{Temporal} if the formula contains a temporal operator; \texttt{Local} if it does not; \texttt{BooleanOp} if its main operator is a boolean operator; \texttt{Atomic} if it is atomic, i.e., if it is (propositional) $true$ (abstract class \texttt{TrueLocal}), (propositional) $false$ (abstract class \texttt{FalseLocal}) or a propositional variable \texttt{LocalVar}; \texttt{Unary} if its main operator is a unary operator and lastly \texttt{Binary} if it is instead binary. Besides, \texttt{BooleanOp} is extended by interfaces representing the usual boolean operators, such as \texttt{Not}, which also extends \texttt{Unary}, and \texttt{And}, which also extends \texttt{Binary}, and so on for the other boolean connectives. We remark that it is necessary to express such different characteristics by means of interfaces, as each formula is indeed a combination of those.
At a lower level of abstraction, we have three main type of formulae: \LDLf, \LTLf and \RE each of which is again an interface extending \texttt{Formula}. Here we provide a detailed description of the structure of \LDLf formulae only, but the same ideas also hold for \LTLf and \RE formulae. Figure~\ref{fig:uml-formula} provides a UML-like class diagram for \LDLf formulae, where dashed boxes represent interfaces, simple boxes are abstract classes; boxes with bold text are classes and arrows mean both extends or implements, depending on whether the extending/implementing entity is an interface or a (abstract) class. \texttt{LDLf} extends \texttt{Formula}, and is extended by interface \texttt{LDLfTemp} (which also extends \texttt{Temporal}); \texttt{LDLfBooleanOp} (which also extends \texttt{BooleanOp}) and \texttt{LDLfLocal} (which also extends \texttt{Local}). Moreover, \texttt{LDLf} is implemented by abstract classes \texttt{LDLfUnary} and \texttt{LDLfBinary}, which also extend \texttt{Unary} and \texttt{Binary}, respectively.
An atomic \LDLf formula is clearly local, and, as presented in Section~\ref{sec:LTLf-LDLf}, can be \texttt{LDLfLocalTrue}, i.e., the propositional $true$ (hence extending \texttt{TrueLocal} and implementing \texttt{LDLfLocal}) or, analogously, \texttt{LDLfLocalFalse} or \texttt{LDLfLocalVar}.
Other local \LDLf formulae are boolean combinations of other local formulae, hence they all implement interface \texttt{LDLfBooleanOpLocal}, such as \texttt{LDLfLocalNot} (which also implements \texttt{Not}) and \texttt{LDLfLocalAnd} (which also implements \texttt{And}). \texttt{LDLfBooleanOpTemp} formulae have an analogous structure.
The other \LDLf temporal formulae are the atomic \texttt{LDLftt} and \texttt{LDLfff} formulae and formulae with a temporal operator, thus extending abstract class \texttt{LDLfTempOpTemp}, that is, \texttt{LDLfBox} and \texttt{LDLfDiamond} formulae.

Although such a hierarchical structure may seem cumbersome, when dealing with a relevant number of classes it is essential to keep the code modular. As an example, let us consider the negation normal form (NNF) of a formula: regardless if a formula is a \texttt{LDLfLocalAnd}, \texttt{LDLfTempAnd}, \texttt{LTLfLocalAnd} or \texttt{LTLfTempAnd}, the logic for transforming an \texttt{And} formula in NNF is the same. Indeed, non-static method \emph{nnf()} has been implemented as a \emph{default} method in the \texttt{And} interface, and it is hence inherited by all implementing classes. Similar considerations also hold for method \emph{getSig()}, which returns the set of propositional variables appearing in the formula, as it can be defined as a \emph{default} method in interfaces \texttt{Unary} and \texttt{Binary}. Another notable example is the \emph{delta()} method, which implements the delta function in Figure~\ref{fig:delta}, which, being the same for all \texttt{LDLfLocal} formulae, has been implemented as a \emph{default} method in interface \texttt{LDLfLocal}. Also, methods returning sub-formulae, such as \emph{getNested}, \emph{getLeft} and \emph{getRight} are defined in the abstract classes \texttt{LDLfUnary} and \texttt{LDLfBinary}.

\paragraph{Automaton construction} The main functionality FLLOAT provides is the automaton generation for an \LTLf/\LDLf formula $\varphi$ given as input, which is implemented in the static method \emph{ldlf2Automaton} of class \texttt{AutomatonUtils}. The whole procedure works as follows. First, the formula is parsed and an \texttt{LTLf} or \texttt{LDLf} object is created. This is achieved by classes that are previously and automatically generated by ANTLR\footnote{\url{http://www.antlr.org}} starting from grammar files. Then, if the input formula is \LDLf, then it is translated in negation normal form (\emph{nnf()} method) and the algorithm for the automata generation is called. Conversely, if it is \LTLf, it must be first converted to \LDLf by method \emph{toLDLf()} implementing the translation explained in Section~\ref{sec:LTLf-LDLf}. Once an \LDLf formula $\varphi$ in negation normal form has been obtained, the automaton is generated with Algorithm~\ref{fig:algo}. The method \emph{ldlf2Automaton} consists of two nested cycles, the outer on states in $\S$ yet to be analyzed and the inner on interpretation for $\Prop$ or the empty trace (line $8$ of Algorithm~\ref{fig:algo}): at each iteration $\delta(s, \Theta)$ is computed, where $s \in \S$ and $\Theta \in 2^\Prop \cup \epsilon$, possibly generating new states $q'$ to be added to the set of states to be analyzed along with the respective transitions (line $8-9$ of Algorithm~\ref{fig:algo}). Since function $\delta$ is recursively defined on the structure of \LDLf formulae, it is implemented by the recursive non-static \emph{delta} method of \texttt{LDLf} class, exploiting the java inheritance features. The Tweety library\footnote{\url{http://tweetyproject.org}} is used to compute the models, i.e., $q'$ states, of the formula $\bigwedge_{(\atomize{\psi}\in q)} \delta(\atomize{\psi},\Theta)$ in line $8$ of Algorithm~\ref{fig:algo}.
The procedure starts by analyzing $\atomize{\varphi}$, the only state in $\S$, and ends when all states have been analyzed and no others have been generated in the meanwhile.
The data structures for automata are defined in the jautomata library\footnote{\url{https://github.com/abailly/jautomata}}, which also provides methods for automata manipulation, such as union, intersection, trimming and determinization.

\paragraph{Runtime Verification} The runtime verification functionalities are provided by the \texttt{ExecutableAutomaton} class. An executable automaton is essentially a deterministic automaton (every \NFA can be determinized) with a reference to the current state. When an executable automaton is created from an automaton, the current state is set to the initial state (by construction there is always a unique initial state). The idea is to navigate the automaton and return \RV truth values while events are executed. Recalling the results presented in Section~\ref{sec:colored-proof}, each automaton state represents a \RV truth value. Hence, an operative way to implement a \RV monitor is to analyze one-by-one the occurring events and to perform the corresponding transitions on the automaton of the constraints. Each time a state change is triggered by a transition leading to state $s$, we calculate $Reach(s)$ and return the corresponding truth value.
In our implementation when a new event is executed, the non-static method $step$, taking $\Theta \in 2^\Prop \cup \set{\varepsilon}$ as input, is called, updating the current state by traversing the corresponding $\Theta$ transition. Method $currentRVTruthValue$ computes $Reach$ for the current state and returns one among $\permtrue, \permfalse, \temptrue, \tempfalse$ as explained at the end of Section~\ref{sec:colored-proof}, thus effectively implementing the \RV semantics.

\section{Conclusions}
In this article, we have brought forward a foundational and practical approach to formalize and monitor linear temporal constraints and metaconstraints, under the assumption that the traces generated by the system under study are \emph{finite}. This is, e.g., the typical case in the context of business process management and service-oriented architectures, where each execution of a business process or service invocation leads from a starting state to a completion state in a possibly unbounded, yet finite, number of steps.

The main novelty of our approach is to adopt a more powerful specification logic, that is, \LDLf (which corresponds to Monadic Second-Order Logic over finite traces), instead of the typical choice of \LTLf (which corresponds to First-Order Logic over finite traces). Like in the case of \LTLf, also \LDLf comes with an automata-theoretic characterization that employs standard finite-state automata. Differently from \LTLf, though, \LDLf can declaratively express, within the logic, not only constraints that predicate on the dynamics of task executions, but also constraints that predicate on the monitoring state of other constraints.

The approach has been fully implemented as an independent library to specify \LDLf/\LTLf formulae as well as obtain and manipulate their corresponding automata, which is then invoked by a process monitoring infrastructure that has been developed within the state-of-the-art ProM process mining framework.

As a next step, we intend to incorporate other monitoring perspectives, such as the data perspective taking into consideration the data carried by the monitored events. This setting is reminiscent of stream query languages and event calculi. For example, the logic-based Event Calculus has been applied to process monitoring against data-aware extensions of the \declare language in \cite{MMC13}, also considering some specific forms of compensation \cite{CMM08}. However, all these approaches are only meant to query and reason over (a portion of) the events collected so far in a trace, and not to reason upon its possible future continuations, as we do in our approach. Genuine investigation is then required towards understanding under which conditions it is possible to lift the automata-based techniques presented here to the case where events are equipped with a data payload and constraints are expressed in (fragments of) first-order temporal logics over finite traces.



\bibliographystyle{splncs04}
\bibliography{main-bib}

\end{document}